\documentclass{article}

\usepackage[margin=0.9in, a4paper]{geometry}
\usepackage[english]{babel}
\usepackage[T1]{fontenc}

\usepackage{
	amsmath,
	amssymb,
	amsthm,
	bbm,
	enumitem,
	float,
	graphicx,
	hyperref,
	cleveref,
	inputenc,
	mathrsfs,
	parskip,
	subcaption,
	subfiles
}

\renewcommand{\P}{\mathbb{P}}
\newcommand{\N}{\mathbb{N}}
\newcommand{\R}{\mathbb{R}}
\newcommand{\E}{\mathbb{E}}
\newcommand{\I}{\mathbb{I}}
\newcommand{\one}{\mathbbm{1}}
\newcommand{\norm}[1]{\left\lVert#1\right\rVert}

\theoremstyle{plain}
\newtheorem{theorem}{Theorem}[section]
\newtheorem{proposition}[theorem]{Proposition}

\theoremstyle{definition}
\newtheorem{definition}[theorem]{Definition}

\begin{document}

\title{Interpolating Location Data with Brownian Motion}
\author{Ludo Dekker, Kerim \DJ eli\'c, Maarten van Dijk, Sven Holtrop, \\ Noah Keuper, Lizanne van der Laan, Tess van Leeuwen, \\ Caspar Meijs, Hanneke Schroten, Lisanne van Wijk}
\date{April 20, 2022}
\maketitle

\noindent\rule{\linewidth}{0.5mm}

\begin{abstract}
	In 2018 the ``Onderweg in Nederland'' (translation: the Dutch Travel Survey) project by Statistics Netherlands was commenced, where participants were asked to track their location using an app, and check if the data is correct. An issue that occurs is so-called `gaps' in the location data, where a whole sequence of data points is missing. The easiest way to fill such a gap is with a straight line, but this leads to systematic errors such as an underestimation of the distance travelled. A more realistic way to fill this gap is with a stochastic process. We use a Brownian bridge to model the movement of a traveller, as these have been used before successfully in ecological research. We find an explicit expression for the distance travelled in terms of some parameters that can be obtained from the data. To test whether this method gives an accurate estimation of the distance travelled, we simulate travel data using multiple different processes. 
\end{abstract}

\clearpage
\tableofcontents

\clearpage
\section{Introduction}\label{sec:introduction}

In 2019 the Dutch population travelled a total distance of well over 200 billion kilometres within the Netherlands \cite{vervoersprestatie}, or equivalently, 250.000 round trips to the moon \cite{afstandmaan}. To be able to construct and maintain the right infrastructure for all these travel movements, it is very important to have reliable data regarding the travel habits of the Dutch population. Statistics Netherlands (CBS from the Dutch name: Centraal Bureau voor de Statistiek) aims to use travel surveys based on Mobile Device Location Tracking (MDLT), which includes GPS and other location tracking methods, to study the mobility of the Dutch population. Respondents use an app that tracks their location over a period of seven days. At the end of each day, they are asked to check whether the app has detected stops correctly and predicted the right mode of travel (for example: walking, cycling or taking the train). Using MDLT has many benefits over the more traditional way of mobility studies. Traditionally respondents were asked to write down all their travel movements at the end of the day. This leads to an underestimation of travel distance, number of trips and travel time, as respondents do not recollect every movement \cite{mccool2021app}. Writing down all travel movements is also more time consuming for respondents than verifying the data already collected by the app.

The usage of MDLT in travel surveys is a relatively new development and doing so at a national scale is unique \cite{smeets2019automatic}. However, ecologists have been using GPS trackers for a long time to study the territories or migrational routes of animals \cite{animalBBMM}.  Compared to the MDLT travel surveys of humans, the GPS trackers used on animals have a low sample rate: the CBS app measures the location of a moving participant every second, while animal GPS trackers typically measure the location every few minutes or even every few hours.

A difficulty with the use of MDLT travel surveys is that there can be gaps of missing data points. These gaps are periods of time during which no location data is collected.  The reasons for these gaps vary, ranging from loss of connection due to a tunnel or bridge, to the respondents phone shutting down the app, or the phone running out of battery. As there are multiple reasons for missing data, the gaps can vary from relatively small, perhaps only a few data points, to gaps of thousands of consecutive data points. Filling in the missing data perfectly is not possible, though by using an approximation it is still possible to get a good idea of the travel habits of the population. However, the missing data cannot be ignored. 

Hence, it is necessary to find a way to properly describe travel habits, in particular the distance travelled and in what area an individual is travelling. To quantify these travel habits we look at the length of a path taken by the traveller to quantify distance travelled and the so called Radius of Gyration (RoG) to quantify the area an individual is travelling through. The latter is a time weighted measure of the spread of the location of a person during a certain period of time. A low RoG indicates that a person spends most of their time in more or less the same area or that their travel paths are "bunched up", while a high RoG indicates that a person travels great distances. To fill the missing data we are now interested in making a good estimate for the missing data with respect to the above attributes of the data. A very simple model would be filling the gap with a straight line path, but it has some disadvantages. For example, it can significantly underestimate the distance travelled, as people probably did not travel in an exact straight line. A gap of many missing data points may also contain one or several stops. Conversely in such a gap a small round trip, for example to walk your dog, can be missed completely.

Our aim is to contribute to methods that can better estimate relevant attributes of the missing data. To do this we will be using a stochastic model, called a Brownian bridge, to approximate the missing paths.

We will start this report by introducing our ideas, experiments, and models in Chapter \ref{sec:methods}. Then we will give the background theory to understand the specific models we are using in Chapter \ref{sec:theory}. A detailed explanation of the ideas, experiments, and models that were briefly discussed before, can be found in Chapter \ref{sec:model}. We then present the results of our experiments in Chapter \ref{sec:simulation}, which we discuss afterwards in Chapter \ref{sec:discussion}. Then in the end we formulate a conclusion based on those results in Chapter \ref{sec:conclusion}.

\clearpage
\section{Methods}\label{sec:methods}

To fill up gaps in the data, we generally want to assume that a person moves in accordance with some stochastic process. Suppose that we have a missing segment in some movement data, then we want to use the available data to predict the movement in the gap. We will assume that the movement in the gap is realised by the same stochastic process with the same parameters as in the available data. Therefore, we now want to characterise our process in case we know it starts at a specific point and ends at a specific point. A stochastic process conditioned on its endpoint is called a bridge.

\subsection{Approaches}

The first of the processes we will discuss is Brownian motion, the continuous limit of a random walk. We looked into this for two reasons. The first and main reason is that Brownian motion is easy to do calculations with. This means it is possible to get theoretical results and to obtain good estimators for model parameters.
The second reason is that on average Brownian motion usually captures the dynamics of a process very well. When given two coordinates and a time between them, one can then construct the so-called Brownian bridge. 

This concept has already been used in ecology to model the movement of animals \cite{animalBBMM}. Now, whereas ecologists are interested in finding the habitat of the animal based on limited data points, we primarily want to know the path length, i.e. the distance travelled. However, as Brownian motion is nowhere differentiable, its path length is not well-defined \cite{kallenberg1997foundations}. To solve this problem, we discretise the Brownian bridge, so we get a random walk of which we can calculate the path length. We have found an expression for the expected value \eqref{eq:pathlengthexpectation} of the distance travelled for such a discretised Brownian bridge.

The second type of process we have briefly looked into, are Feller processes. This is a special type of continuous time process. For the interested reader, see \cite{kallenberg1997foundations}. 
Specifically, we considered a type of the Feller process, which consists of a position and an internal state.
The movement of the process is determined by the internal state, which is independent of the actual position of the process.
The internal state can be used to model the velocity of the person or different modes of transportation. 
In such a case, transition probabilities of the internal state represent changing one's modes of transportation, direction or speed.
For example, we say that someone who is stationary has a certain probability to remain stationary or to start moving again.
Similarly, for motion, we expect that the velocity of a person is dependent on their previous velocity, i.e. we do not expect someone to suddenly turn 180 degrees at high speed.
It turns out that calculating a bridge for such processes was too difficult for us. 
So, for filling gaps in the location data, we have only considered the Brownian bridge model. 
However, we have still simulated processes with an internal state numerically to assess how well the Brownian bridge model performs on motion that does not behave precisely as Brownian motion. All the stochastic models that we are going to describe, besides the Brownian bridge model, are purely for data generating purposes which will be more clear later on.

\subsection{Implementation}\label{subsec:implementation}

Note that we have briefly mentioned two type of models. For a formal definition of these models and their properties, the reader can look at Chapters \ref{sec:theory} and \ref{sec:model}. The aim of this section is to give a brief overview of our implementation of these models and how we have set up our experiments. Each model and experiment is implemented in Python. For the reader who is interested in our code, we refer to our \href{https://github.com/kerimdelic/MathForIndustry/blob/main/CBS.ipynb}{notebook}. The results of our experiments will be displayed in Chapter \ref{sec:simulation}.

First, we have implemented the Brownian bridge model. The parameters of our model are the following:
\begin{itemize}
    \item The starting position of the bridge. 
    \item The ending position of the bridge. 
    \item The number of time steps. 
    \item The \emph{diffusion coefficient} $\sigma_m$. 
\end{itemize}
With the above parameters, we can simulate a Brownian bridge and choose where it will be located. Furthermore, we can also choose what its deviation will be from the mean by fixing the diffusion coefficient $\sigma_m$. This coefficient will thoroughly be discussed in Section \ref{subsec:brownianbridge}. Another important parameter is the number of time steps, as this number is the same as the total number of location points that form the path. 

The model itself basically simulates a path that follows a Brownian motion. Of course, we can calculate certain properties of this path. The two properties that we will see in several parts throughout the paper are the path length and the RoG. The formal definition of the RoG can be found in Section \ref{subsec:rogtheory}. These properties can be extracted from a complete path, meaning that for a given path without gaps, we can numerically calculate the path length and the RoG. However, as we will discuss paths with a gap, we want to extract the same properties by filling the gap by a certain procedure. 

We followed the procedure as outlined in \cite{animalBBMM} to be able to estimate the diffusion coefficient. More precisely, it is of interest to estimate the mobility of a person during the travel. This description coincides with the definition of the diffusion coefficient. Thus, this procedure actually derives an estimate of this coefficient.

\subsubsection*{The Estimation Procedure}

Suppose the location data is of the form $(z_0, t_0), \dots, (z_{2n}, t_{2n})$. At each time $t_i$, there is a location $z_i$. The estimation procedure can be described by three steps.
\begin{enumerate}
    \item Between every other point we construct independent Brownian bridges with the same unknown parameter $\sigma_m$, so there is a bridge between $z_0$ and $z_2$, then a bridge between $z_2$ and $z_4$, and so on. 
    \item We consider each skipped point $z_{2k+1}$ as a realisation at time $t_{2k+1}$ of the Brownian bridge between $z_{2k}$ and $z_{2k+2}$.
    \item Then, we can use maximum likelihood estimation for these odd observations $(z_1, z_3, z_5,...)$ to estimate the unknown parameter $\sigma_m$. The explicit likelihood function can be found in \ref{eq:likelihood}. 
\end{enumerate}
The difference between our procedure and the procedure in the paper \cite{animalBBMM} is that we maximise the log-likelihood function. Besides this, we also use a different algorithm for finding the maximum of the function, namely ternary search.

An illustration of the estimation procedure is given in Figure \ref{fig:graph1}. 
\begin{figure}[h!]
    \centering
    \includegraphics[width=0.4\textwidth]{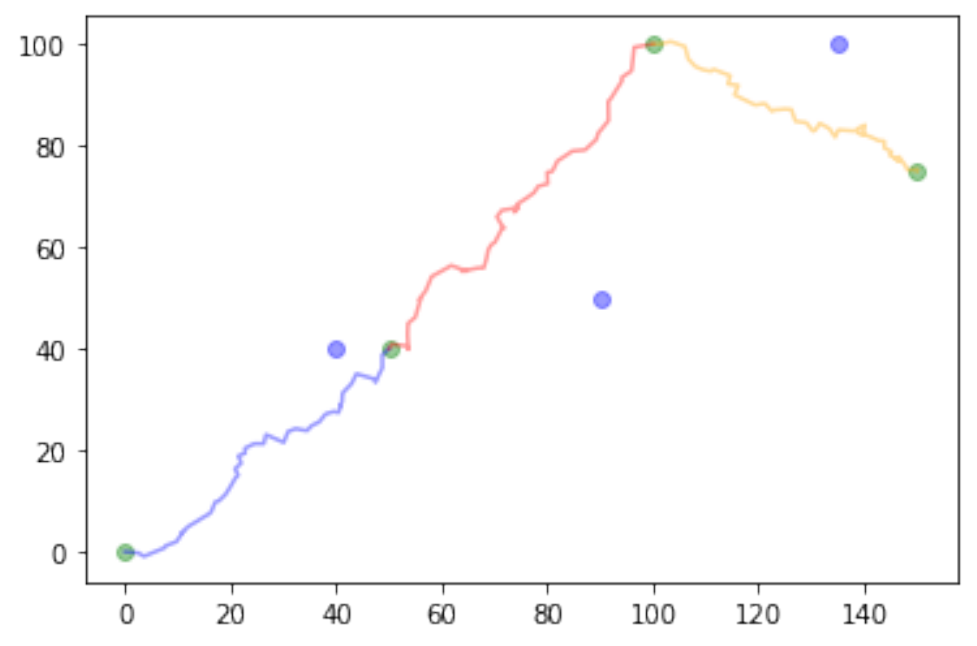}
    \caption{We have created seven ``fake'' points. The green dots are the even location points $(z_0, z_2, z_4, z_6)$ and the blue dots are the odd location points $(z_1, z_3, z_5)$. Between each pair of green dots, a Brownian bridge is simulated. All three bridges are modelled independently and form a large path. Each blue dot is interpreted as a location that results from the corresponding Brownian bridge. This location does not have to be on the path, and is used for our estimation procedure that obtains the most likely $\sigma_m$ for the three Brownian bridges.}
    \label{fig:graph1}
\end{figure}

Given some movement data with a gap, we can now estimate the diffusion coefficient. Moreover, we can also extract the starting and ending position of the gap and the number of time steps of this gap. This means we have all the parameters for the Brownian bridge model, and therefore we can calculate the expected path length during the gap. This expected value as a function of the parameters will be given in Section \ref{subsec:pathlength}. 

For the RoG we were not able to find an analytical expression for the expected value. However, since we have all the parameters for the Brownian bridge model we can simulate paths for the gap, and then calculate the RoG. To estimate the RoG we simulate this numerous times and take the average RoG over all simulations.

\subsection{Experiments}\label{subsec:experiments}

We have introduced our estimation procedure for the diffusion coefficient, whose accuracy we want to quantify in terms of path length and RoG.
Therefore, we have constructed two main experiments. One experiment concentrates on the path length and the other experiment is based on the RoG.
The general setup of these experiments is as follows:
\begin{enumerate}
    \item A path is created from some process.
    \item The actual values of the path length and RoG are calculated. 
    \item A gap is created, by deleting a certain number of consecutive location points.
    \item We estimate $\sigma_m$ by feeding our estimation procedure the incomplete path.
    \item With the estimated $\sigma_m$, we also estimate the path length and RoG.
\end{enumerate}
In Figure \ref{fig:procedure} an illustration of the above framework is given.

\begin{figure}[h!]
     \centering
     \begin{subfigure}[b]{0.3\textwidth}
         \centering
         \includegraphics[width=\textwidth]{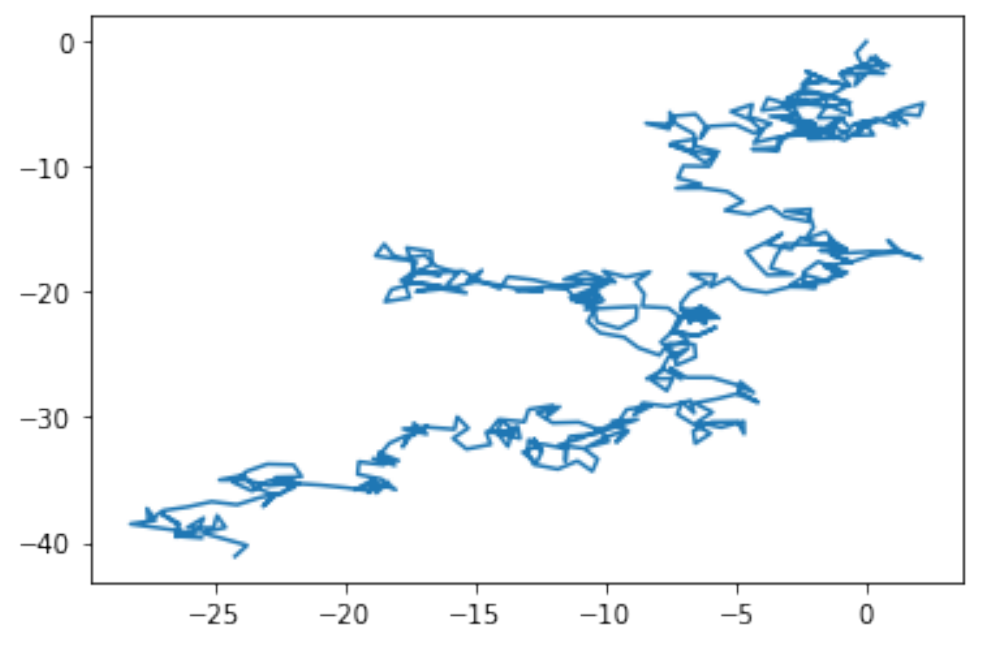}
         \caption{A fixed velocity random walk from $(0,0)$ to $(-24.21,-41.12)$.}
         \label{fig:path}
     \end{subfigure}
     \hfill
     \begin{subfigure}[b]{0.3\textwidth}
         \centering
         \includegraphics[width=\textwidth]{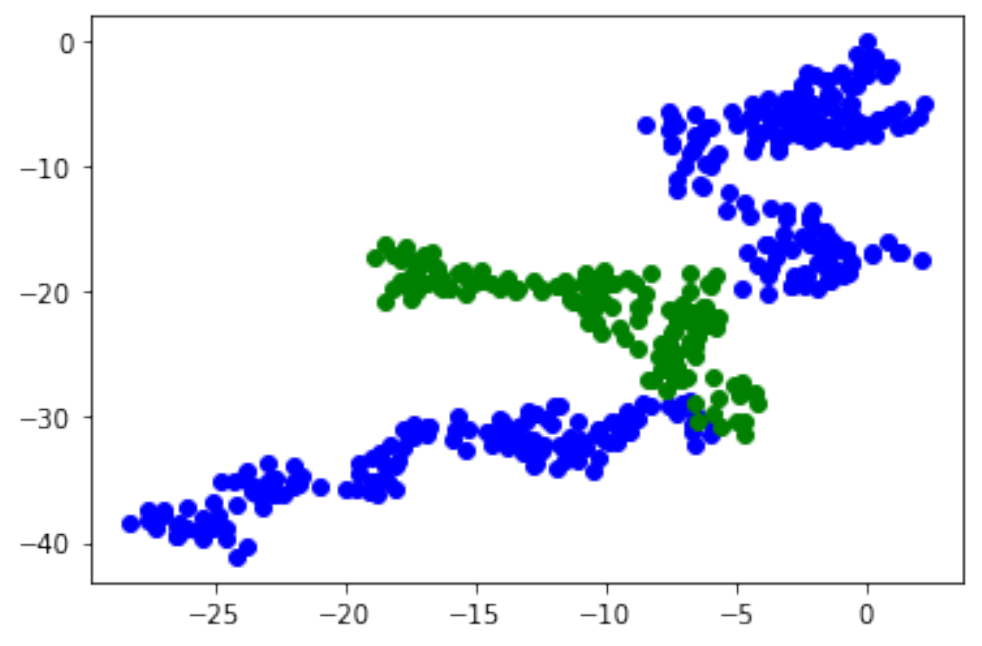}
         \caption{In green, a gap from time step $200$ to $350$.}
         \label{fig:gap}
     \end{subfigure}
     \hfill
     \begin{subfigure}[b]{0.3\textwidth}
         \centering
         \includegraphics[width=\textwidth]{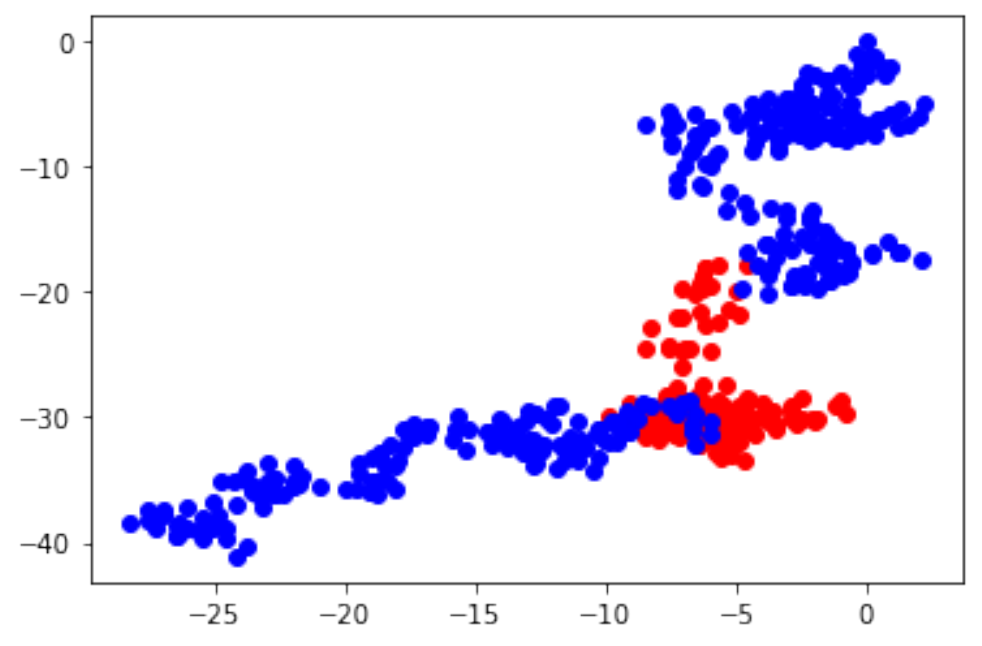}
         \caption{In red, the gap filled by a Brownian bridge with estimated $\sigma_m$.}
         \label{fig:fillgap}
     \end{subfigure}
        \caption{In \ref{fig:path}, a path is created of $500$ time steps which follows from the fixed velocity random walk model. Then, in \ref{fig:gap}, a gap is created where we omit the location points of all time steps in $[200, 350]$. Thereafter, in \ref{fig:fillgap} the gap is filled by creating a Brownian bridge of $150$ time steps with an estimated $\sigma_m$ from the incomplete path of \ref{fig:gap}.}
        \label{fig:procedure}
\end{figure}

With this setup we can measure how our estimation procedure compares to the actual value in terms of path length and RoG. Moreover, we will also compare our estimation procedure with linear interpolation. Linear interpolation fills the gap with a straight line with constant velocity. 

In the next two sections we will discuss in more detail which specific choices we have made regarding the production of data. We have used different models and different choices of parameters for the experiments. For the interested reader, we refer to Appendix \ref{app:processes} for the technical details of these data generating models. The reason behind this is that we had no access to empirical data. Note that comprehension of the generating models is not important for understanding our experiment.

\subsubsection{Experiments Regarding the Path Length}

For the experiments regarding path length, we have used four different models. These models are the discrete Brownian motion model, angular random walk model, random walk with internal state model and run-and-tumble model. Each model has its own parameters (for the details we refer again to Appendix \ref{app:processes}). These parameters affect the amount of randomness in their realisations. We will now describe our choices of parameters for reproducibility of the experiments. 

The discrete Brownian motion model has a clear parameter for the randomness, since it is made by realisations of a normal distribution with some variance. We will see that this variance $\sigma$ is indirectly controlled by the diffusion coefficient. In our experiments, we will use a fixed travelling distance of $10$ in $200$ time steps and model this with different standard deviations of $\sigma \in \{0.01,0.1,1,10\}$ to give a far spread in the amount of randomness in the system. As our estimation procedure is based on the assumption that the movement is a Brownian motion, we expect our procedure to perform well on this type of data. The result of this experiment should therefore be mainly seen as a sanity check that our procedure works for the best possible case.

The angular random walk model gives quite a natural way of moving in an open space without any abrupt stops or tight corners. This kind of movement can correspond to a car on a highway or a person walking in a field. Here the amount of randomness is dictated by the variance for the angular acceleration. For a very large variance we will almost get a completely uniform random direction in each time step (which is referred to as the fixed velocity random walk model), which will give more chaotic behaviour. We will consider standard deviations $\sigma \in \{0.1, 0.5, 1, 5\}$. The spread here is a bit smaller, since this model converges to straight lines quite fast for small $\sigma$, which is not that interesting to consider. 

The run-and-tumble model has more sharp corners and sudden changes in behaviour. This looks more of a movement in urban areas. The parameter $l$ here that we will vary is related to the chance that the direction will be changed. We will use the parameter $l \in \{ 0.1, 0.5, 1, 3\}$.

One of the similarities between the discrete Brownian motion-, angular random walk- and run-and-tumble model is that they all seem to behave the same at all times. In the random walk with internal state model, there is an internal state where it can occur that the movement stops for a while. Here there are discrete chances for what will be the next move. To dictate the randomness that will appear, we have varied the chances that the state changes and certain movements will happen. In the most extreme case, all options will be performed uniformly random.

For each such model, we have simulated paths of $200$ data points and created a large gap by removing $100$ data points in the middle of these paths. We have simulated this process $1000$ times for each choice of parameters. We have normalised each simulation with the actual path length. The results of the specific comparison with linear interpolation will be displayed in Figure \ref{fig:subfigures}.

\subsubsection{Experiments Regarding the RoG}

To be as concise as possible, we have considered three models (fixed velocity random walk, fixed angular random walk and run-and-tumble). The main idea of this experiment is to compare the RoG of a path with the RoG of the same path, but a part of the path is deleted and filled by a Brownian bridge (according to the estimation procedure). 

Each model has generated $1000$ paths where the corresponding parameter is fixed. Fixed velocity random walk with a velocity of $1$, fixed angular random walk with $\sigma =0.1$ and run-and-tumble model with $l=1$. We have chosen these settings, as they create paths with a small RoG and with a larger RoG. This way, we have more diversity in our results. For each path, the first half of the location points is deleted. The reason behind this is to consider an extreme case where the gap is rather large. The number of location points are $1000$, so the first $500$ points will be deleted. These are all the choices that we have made for these particular experiments. The results can be found in Section \ref{subsec:rog}.

\clearpage
\section{Introduction to Brownian Bridges} \label{sec:theory}

In the CBS travel survey, we sometimes encounter gaps in the data. We want to fill in these gaps by modelling a path between the last data point before and the first data point after the gap. One of the most common methods modelling this path is called the Brownian bridge. In this chapter we will set up a two-dimensional Brownian bridge and derive its distribution.

In the first section, we will elaborate on the mathematical theory of Brownian motion, and in the second section we construct a Brownian bridge. If the reader is already familiar with the concept of a Brownian bridge, then this chapter may be skipped.

\subsection{Brownian Motion}

Before we introduce Brownian motion, we will first explain the notion of a stochastic process in an informal manner, for more details and background one can look at \cite{kallenberg1997foundations}. To do so, let us consider the Euclidean space $\mathbb{R}^n$ and the movement of a particle in such a space. For example, such a particle could model a molecule in a solution or a person walking around. However, in most cases we cannot possibly predict such movement exactly. To capture such non-deterministic behaviour, we turn to probability theory and view the path itself as a collection of random vectors. 

In general, a \emph{stochastic process} in continuous time \cite[pg. 189]{klenke} is a collection $X = \{X_t: t \in I\}$ of random vectors in $\mathbb{R}^n$, where $I$ is some continuous index set, like $(0, \infty)$ or $[0, T]$. A continuous-time process $X$ is called \emph{continuous} if $t \mapsto X_t$ is continuous almost surely.
And in the case of movement through Euclidean space, $X_t$ represents the distribution of our paths at time $t$. 

Now, there exist many types of continuous stochastic processes. Although these processes are useful, as they can capture quite general dynamics, it can be difficult to explicitly calculate their distribution. So, we will only consider \emph{Brownian motion}. For a concrete definition, see \Cref{def:brownianmotion}. This is a stochastic process where all time increments are independent, normally distributed random variables. Specifically, let $W_t$ be a Brownian motion process. Then, for times $t$ and $s$ with $t \geqslant s$ we have that
\begin{equation*}
    W_t - W_s \sim W_{t-s} \sim N(0, t - s),
\end{equation*}
where $N(\mu, \sigma^2)$ denotes a normal distribution with mean $\mu$ and variance $\sigma^2$. Since we have such a concrete distribution for the intervals, it is very feasible to do calculations with Brownian motion.

One particular property we wish to highlight, is the scale invariance of the Brownian motion process. Notably, suppose we scale space by $\varepsilon$ and time by $\varepsilon^{-2}$, then we get the process $\varepsilon W_{\varepsilon^{-2} t}$. In particular, we see that for $t \geqslant s$
\begin{equation*}
    \varepsilon W_{\varepsilon^{-2} t} - \varepsilon W_{\varepsilon^{-2} s} \sim \varepsilon N(0, \varepsilon^{-2}(t -s)) \sim N(0, t - s).
\end{equation*}
As the above in a sense entirely characterises the Brownian motion, we see that $\varepsilon W_{\varepsilon^{-2} t} \sim W_t$. This scale invariance means that the movement of the process looks the same no matter at what scale one considers the process. In general, this scale invariance is not observed in the movement of people. On a scale of centimetres, a person will walk in a straight line. On a scale of hundreds of metres, they will not move in a straight line due to constraints of the local environment i.e. how one is forced to move due to obstructions such as buildings. However, on a scale of kilometres, they will move roughly in a straight line, since one generally wants to take the shortest path between two locations. Hence, when modelling any type of movement, one should account for the scale of the movement being modelled. We will see this again in Chapter \ref{sec:simulation}.

\subsection{Brownian Bridge}\label{subsec:brownianbridge}

Now let us move on to a formal construction of the Brownian bridge. This section is inspired by \cite{animalBBMM}. Let $\{W_t\}_{t \geqslant 0}$ be a Brownian motion and fix a time $T \in (0, \infty)$. Then we can define a \emph{standard one-dimensional Brownian bridge} \cite[pg. 175]{shrevenfinancialmath} by $X = \{X_t: t \in [0, T]\}$, with 
\begin{equation*}
X_t := W_t - \frac{t}{T} W_T.
\end{equation*}
This is a standard one-dimensional Brownian bridge in the sense that one can interpret it as a continuous walk on the real line that begins and ends in zero.
We can also write $X_t$ as
\begin{equation*}
    X_t = \frac{T-t}{T} W_t - \frac{t}{T} (W_T - W_t).
\end{equation*}
Since $W$ is a Brownian motion, it follows that $W_T - W_t \sim W_{T-t}$. This implies that
\begin{equation*}
    \frac{T-t}{T} W_t \sim N\left(0, \frac{t(T-t)^2}{T^2}\right)\;\;\; \text{ and }\;\;\; \frac{t}{T}(W_T - W_t) \sim N\left(0, \frac{t^2(T-t)}{T^2}\right).
\end{equation*}
A Brownian motion has independent increments, so
\begin{equation*}
    X_t \sim N\left( 0, \frac{t(T-t)^2}{T^2} + \frac{t^2(T-t)}{T^2}\right) \sim N\left(0, \frac{t(T-t)}{T}\right)
\end{equation*}
as the sum of independent normals is again normal, see \Cref{prop:sumnormals}. 

Note that the process is still purely one-dimensional. The application to location data is of course two-dimensional, so we need to extend this idea. Since we expect there to be no preference for a specific cardinal direction, we may assume that the coordinates will be independent. That way, if we let $Y$ be another standard Brownian bridge distributed identically to $X$, then the standard Brownian bridge in two dimensions is simply
\begin{equation*}
    \begin{pmatrix} X_t \\ Y_t \end{pmatrix} \sim N_2\left(0, \frac{t(T-t)}{T} \I_2 \right),
\end{equation*}
as a vector of two i.i.d.\ normal random variables becomes a bivariate normal random vector, by \Cref{prop:distrnormals}.

When we look at this process, we see that when $t = 0$ and when $t = T$, the process has an $N_2(0, 0\I_2)$ distribution, which means it is a constant in the origin $O$, i.e.\ the process will always begin and end in $O$.
\begin{figure}[H]
    \centering
    \includegraphics[width=0.7\linewidth]{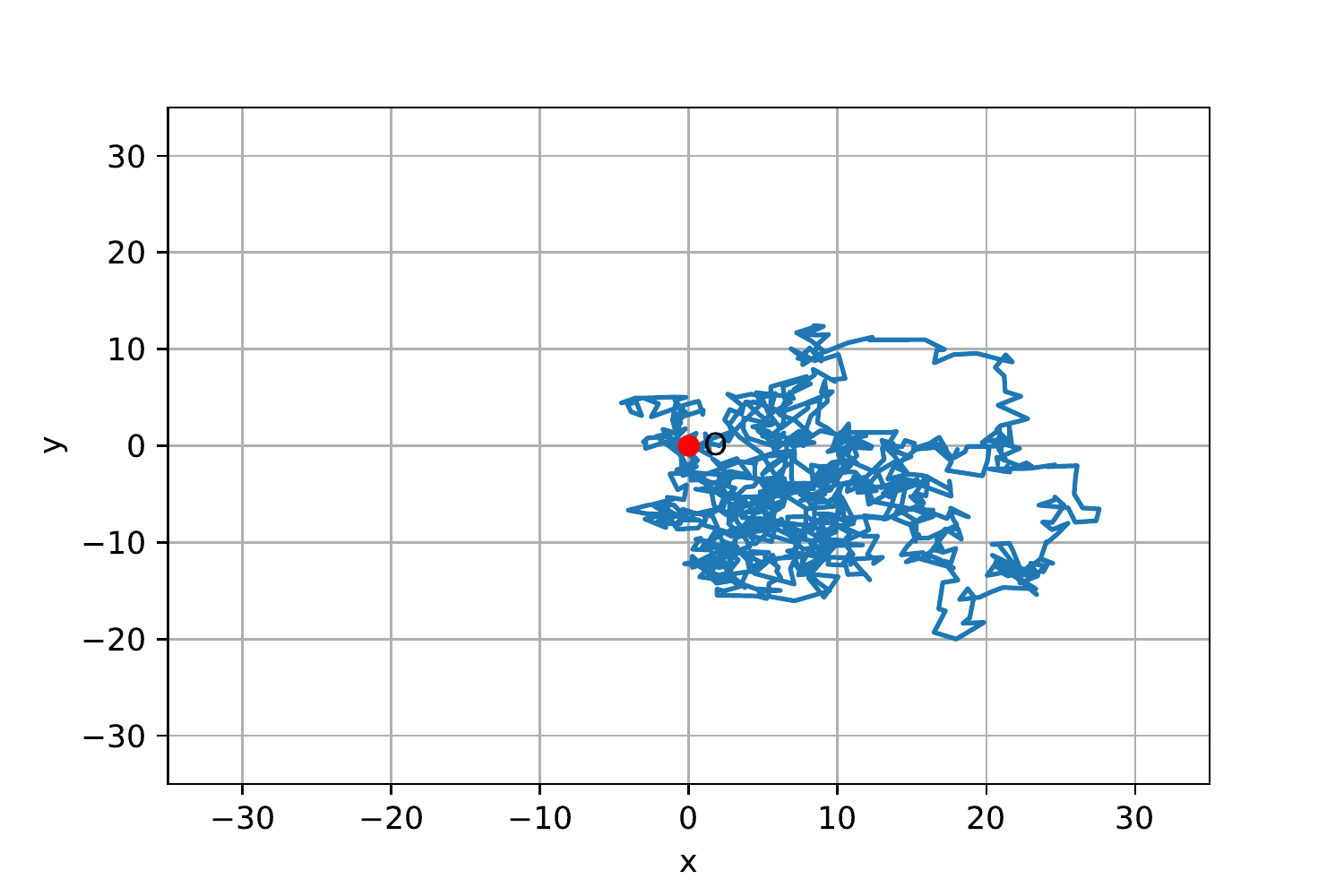}
    \caption{A simulated realisation of a two-dimensional Brownian bridge starting and ending in $O$, with $T = 1000$.}
    \label{fig:standard2DBB}
\end{figure}
As we can see in the figure above, this process simulates a \emph{round trip}: we begin and end in the origin. However, we will be using this process to fill in a gap between two different points. We can do this easily by adding a time dependent path between these points. Suppose we want to model a path from the origin $O$ to some $d \in \R^2$, then on average, we assume the path to be a straight line. This is parametrised by the function 
\begin{equation}
    \mu(t) = \frac{t}{T} d.
\end{equation}
This is a two dimensional vector, so we can simply say that our process becomes $\mu(t) + (X_t, Y_t)$. Recall here that we have made the assumption that the traveller travels in a straight line. With no further information about the traveller or their surroundings, this is a valid assumption. If one has reason to believe the traveller would behave differently, one can model that with a different function for $\mu$.

Since there are many different ways of travel, we want to include another parameter that accounts for this. In the literature it is often called the \emph{diffusion coefficient}, and we denote it by $\sigma_m$. It works as follows. If we consider $\mu(t) + \sigma_m(X_t, Y_t)$, then this is a normally distributed random vector with covariance matrix
\begin{equation*}
    \sigma_m \frac{t(T-t)}{T} \I_2.
\end{equation*}
We see that if $\sigma_m$ is small, the process is unlikely to move far from its expected value, in this case $\mu(t)$. If $\sigma_m$ is large, it becomes quite likely to do so. In this way we can model different modes of transport: a train is much more likely to move in a straight line than someone on a bicycle, so the train will have a lower $\sigma_m$ than the bicycle.

With this, we can finally define the process we will use to model what happens in a gap in the data. We define a \emph{general Brownian bridge} $Z = \{Z_t: t \in [0, T]\}$ by
\begin{equation*}
    Z_t := \mu(t) + \sigma_m \begin{pmatrix} X_t \\ Y_t \end{pmatrix}.
\end{equation*}
If we let $\sigma^2(t) := \sigma_m^2 t (T-t) / T$, then $Z_t \sim N_2(\mu(t), \sigma^2(t)\I_2)$. Let us look at a simulated realisation of such a process.
\begin{figure}[H]
    \centering
    \includegraphics[width=0.7\linewidth]{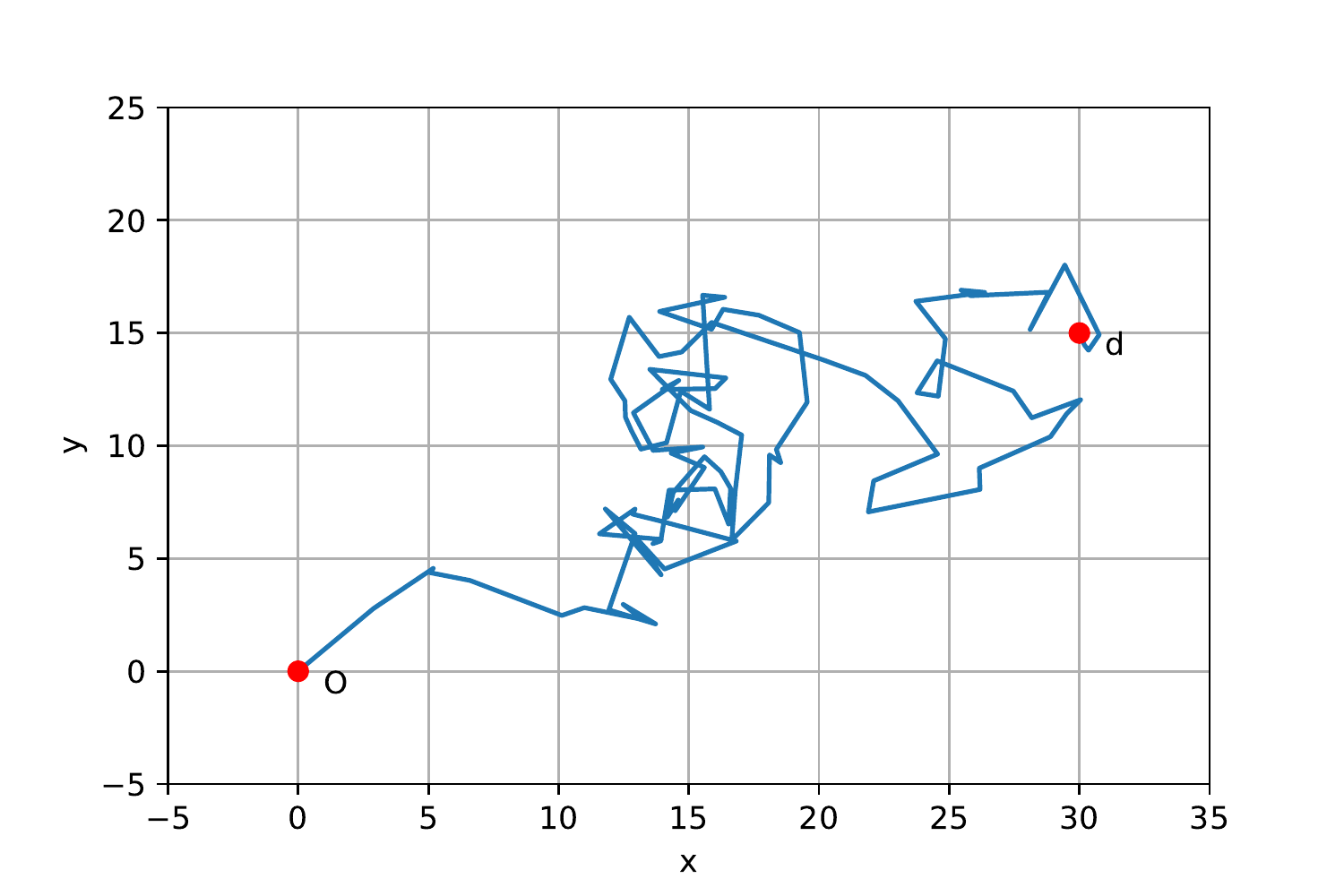}
    \caption{A simulated realisation of a general Brownian bridge with $\sigma_m = 2$, $T = 100$ and $d = (30, 15)$.}
    \label{fig:general2DBB}
\end{figure}
We see that this process indeed simulates a path from $O$ to $d$. As is usual for a Brownian movement, the path is fairly erratic. However, on average it doesn't stray too far from the path. This is because $\sigma_m$ is quite small.

\clearpage
\section{Model} \label{sec:model}

We are interested in modelling gaps in the location data of a traveller. A common way to model movement of a person or animal is with the Brownian Bridge Movement Model, or BBMM \cite{animalBBMM}. This can be used to fill the gaps in between each measurement, or to fill in a larger gap, where data is missing. We will be doing the latter. Recall that we are not interested in the exact path a traveller takes, as this can be very different with each realisation. Furthermore, to be realistic it should depend on the traveller's environment, which is something that we cannot take into consideration, as it would make the problem much too complex. Instead, we are interested in more general properties, like the expected length of the travellers path and the RoG. We will define these concepts in the following sections.

\subsection{Expectation of Discrete Brownian Bridge Path Length}\label{subsec:pathlength}

In this section we will derive an expression for the expectation of the length of a realisation of a discretised Brownian bridge. To do this, we first need to discretise the continuous time Brownian bridge we have already seen, and then we define a path using that discrete process. We will investigate the distribution of the length of that path, and look at some limiting properties of its expected value.

\subsubsection{Path Length of a Discretised Brownian Bridge}

The paths that are defined in Section \ref{subsec:brownianbridge} are all paths continuous over time. We discretise the general Brownian bridge process as follows.

Consider the process $Z = \{Z_t : t \in [0, T]\}$ with 
\begin{equation} \label{eq:generalbridge}
    Z_t = \mu(t) + \sigma_m \begin{pmatrix} X_t \\ Y_t \end{pmatrix} \sim N_2\left(\mu(t), \sigma^2(t) \I_{2}\right),
\end{equation}
where $\mu(t) = td/T$ and $\sigma^2(t) = \sigma_m^2t(T-t) / T$, as defined in Section \ref{subsec:brownianbridge}. Fix $n \in \N$ and let $0 = t_0 < t_1 < \dots < t_{n-1} < t_n = T$ be any discretisation of the interval $[0, T]$. A path can now be defined as taking straight lines between these $n+1$ points. That is, define the path
\begin{equation*}
    t \mapsto \sum_{k =1}^n \one_{\left(t_{k-1}, t_k \right]}(t) \left(Z_{t_{k-1}} + (Z_{t_k} - Z_{t_{k-1}}) (t - t_{k-1}) \right).
\end{equation*}
The length of this path is then given by 
\begin{equation*}
    \ell^n(Z) = \sum_{k=1}^n \norm{Z_{t_k} - Z_{t_{k-1}}}. 
\end{equation*}
This path length is a random variable, and though it is a function of a process with a fairly simple distribution, the square root makes the situation a bit more complicated. In the next part we will find the distribution of this path length.

\subsubsection{Distribution and Results} \label{subsec:bbresults}

To find an explicit distribution for $\ell^n(Z)$, we need to look at the distribution of $Z_{t_k} - Z_{t_{k-1}}$. Recall from \eqref{eq:generalbridge} that $Z_t$ has a normal distribution with parameters $\mu(t)$ and $\sigma^2(t)$. Then
\begin{equation*}
    Z_{t_k} - Z_{t_{k-1}} = \frac{t_k - t_{k-1}}{T}d + \sigma_m \begin{pmatrix} X_{t_k} - X_{t_{k-1}} \\ Y_{t_k} - Y_{t_{k-1}} \end{pmatrix},
\end{equation*}
so we first need to find the distributions of $X_{t_k} - X_{t_{k-1}}$ and $Y_{t_k} - Y_{t_{k-1}}$. This follows fairly straightforwardly from one of the properties of the Brownian motion, as seen in the proof of \Cref{prop:1dbridgedifference}. This proposition tells us that $X_{t_k} - X_{t_{k-1}} \sim X_{t_k - t_{k-1}}$ and $Y_{t_k} - Y_{t_{k-1}} \sim Y_{t_k - t_{k-1}}$, so
\begin{equation*}
    Z_{t_k} - Z_{t_{k-1}} \sim \mu(t_k - t_{k-1}) + \sigma_m \begin{pmatrix} X_{t_k - t_{k-1}} \\ Y_{t_k - t_{k-1}} \end{pmatrix} = Z_{t_k - t_{k-1}}.
\end{equation*}
This implies that we can write
\begin{equation*}
    \ell^n(Z) \sim \sum_{k=1}^n \norm{Z_{t_k - t_{k-1}}}.
\end{equation*}
This holds for any discretisation $0 = t_0 < \dots < t_n = T$ of the interval $[0, T]$. A reasonable assumption would be to split $[0, T]$ into $n$ intervals of equal length by setting $t_k = kT/n$. Then, $Z_{t_k - t_{k-1}} = Z_{T/n}$ for each $k \in \{1, \dots, n\}$. That means that
\begin{equation} \label{eq:discpathlength}
    \ell^n(Z) \sim \sum_{k=1}^n \norm{Z_{T/n}} = n \norm{Z_{T/n}} = \norm{nZ_{T/n}}.
\end{equation}
As $\mu(T/n) = d/n$ and $\sigma^2(T/n) = \sigma_m^2T(n-1) / n^2$, we see that $nZ_{T/n}$ is distributed as
\begin{equation*}
    nZ_{T/n} \sim N_2\left( d, \sigma_m^2T(n-1)\I_2 \right).
\end{equation*}
\Cref{def:riciandistribution} now tells us that $\ell^n(Z) \sim \norm{nZ_{T/n}}$ has the Rice distribution with parameters
\begin{equation*}
    A = \norm{d}, \;\;\; B = \sigma_m^2T(n-1).
\end{equation*}
By that same definition we find that the expected value of the path length $\ell^n(Z)$ is given by
\begin{equation}\label{eq:pathlengthexpectation}
    \E[\ell^n(Z)] = \sigma_m\sqrt{T(n-1)}\sqrt{\frac{\pi}{2}}L_{\frac{1}{2}}\left(-\frac{\norm{d}^{2}}{2\sigma_{m}^{2}T(n-1)}\right). 
\end{equation}
Note here that $L_{\frac12}$ is a Laguerre function \cite{laguerre}. This expected value is the main result of this section. In the next part, we will look at limits of this expectation, but first we will see that this expected value has a lower bound of $\norm{d}$.
This is quite a logical lower bound: you cannot find a shorter distance than the shortest distance, the norm. Applying Jensen's inequality \cite[pg. 107]{IKSboek}
to the convex function $(x, y) \mapsto \norm{(x, y)}$ we indeed see that
\begin{equation*}
    \E \left[\ell^n(Z)\right] = \E \norm{nZ_{T/n}} = n\E \norm{Z_{T/n}} \geqslant n \norm{\E\left[Z_{T/n}\right]} = \norm{d}.
\end{equation*}

\subsubsection{Limiting Properties of the Expected Path Length}\label{subsec:limits}

Let us look at some limits of the expected path length, i.e.\ the function
\begin{equation*}
    f(\sigma_m, T, d, n) := \E[\ell^n(Z)] = \sigma_m \sqrt{T(n-1)}\sqrt{\frac{\pi}2} L_{\frac12}\left(-\frac{\norm{d}^2}{2\sigma_m^2 T(n-1)} \right),
\end{equation*}
parameter by parameter. We will use the following property of the Laguerre function to calculate limits of $f$,
\begin{equation} \label{eq:limit}
\lim_{x \to \infty} \sqrt{\frac{\pi}{2x}} L_{\frac12}\left( - \frac12 x \right) = 1.
\end{equation}
Although it is possible to show this analytically using Bessel functions, we have simply plugged this expression into a graphing software.
First, let us look at limits of $\sigma_m$. If $\sigma_m \downarrow 0$, we know for certain that the traveller moves over the straight line from origin $O$ to endpoint $d$, without deviating from that path. However, if $\sigma_m$ is large, the traveller is very likely to deviate from the straight line. So likely in fact, that the length of the path they walk blows up, as they might go very far from the straight line before they finally end up in $d$. Using \eqref{eq:limit}, we indeed see that
\begin{equation*}
    \lim_{\sigma_m \downarrow 0} f(\sigma_m, T, d, n) = \norm{d}, \;\;\; \lim_{\sigma_m \to \infty} f(\sigma_m, T, d, n) = \infty.
\end{equation*}

For $T$ the intuition is a bit different. We get the limits
\begin{equation*}
    \lim_{T \downarrow 0} f(\sigma_m, T, d, n) = \norm{d}, \;\;\; \lim_{T \to \infty} f(\sigma_m, T, d, n) = \infty,
\end{equation*}
but if $T = 0$, our problem is undefined whenever $d \neq O$. When $d = O$, the Brownian bridge process is just a constant, namely $O$, so then the limit is reasonable, but useless. The limit $T \to \infty$ makes a bit more sense. Then the traveller wanders around so long, they will never arrive, which means that their path length must be infinite as well. Note that although $n$ is discrete, we might consider an analytic continuation. In this case, the behaviour of $n - 1$ is exactly identical to that of $T$. The limit $n \to 1$ then corresponds to estimating the path by a single line segment for which we get an expectation of $\norm{d}$. On the other hand, if $n$ goes to infinity, the path length goes to infinity. This is as expected, since as $n$ goes to infinity, the random walk becomes Brownian motion, which has an infinite path length. 

Moving on to limits of $d$, it follows that since $\norm{d}$ is a lower bound for the expected path length, the expected path length blows up as $\norm{d} \to \infty$. The limit where $\norm{d} \downarrow 0$, i.e.\ $d \to O$, is a round trip. We find
\begin{equation*}
    \lim_{\norm{d} \to \infty} f(\sigma_m, T, d, n) = \infty, \;\;\; \lim_{\norm{d} \downarrow 0} f(\sigma_m, T, d, n) = \sigma_m \sqrt{\frac\pi2 T(n-1)}.
\end{equation*}
Recall that $\ell^n(Z)$ has the Rice distribution. It turns out that when $\mu(t) = 0$, as it is when $d = O$, the distribution of $\ell^n(Z)$ is a special case of the Rice distribution, called the Rayleigh distribution, see \Cref{def:rayleighdistribution}. This distribution indeed has the expected value given above.

We see that all limits of this function are as we would expect them to be, so \eqref{eq:pathlengthexpectation} does seem to make sense. Since $d$ and $T$ follow directly from the data, and the choice of $n$ is up to us, the only thing we need to consider is how to choose $\sigma_m$. It turns out that we can estimate a value for $\sigma_m$ based on the data points we do have, i.e.\ the data points before and/or after the gap.

\subsection{Estimating the Diffusion Coefficient}

It is essential to estimate $\sigma_m$ correctly since the modelled path depends heavily on this. We expect that this diffusion coefficient can be very different when considering two distinct travel movements. For example, a train moves in a very straight line compared to a pedestrian walking in a city centre and thus their diffusion coefficients should reflect that. Note that we can not determine a uniform diffusion coefficient for a fixed mode of transportation, there are many more parameters that have an influence on the value of $\sigma_m$. For instance, a car driving on a highway is expected to have a different diffusion coefficient than a car driving in the centre of a city, as a car on a highway has a very linear path and in a city centre there are a lot of opportunities to turn.

Hence, we want to estimate $\sigma_m$ for some travel movement based on the available location data for that movement. To do this, we consider location data in a single movement, with a single mode of transportation. The following is based on the section \emph{Parameter estimation} in \cite{animalBBMM}.

Let us assume we have $2n+1$ location measurements, denoted by $(z_0, t_0), (z_1, t_1), \dots, (z_{2n}, t_{2n})$, where $z \in \R^2$ is the travellers location at time $t$. On each time interval $[t_{2k}, t_{2k+2}]$ we will construct a Brownian bridge with diffusion coefficient $\sigma_m$ between $z_{2k}$ and $z_{2k+2}$, and then we will view $z_{2k+1}$ as a realisation of that Brownian bridge at time $t_{2k+1}$. Doing this for every increment, we can find a maximum likelihood estimator for $\sigma_m$.

Let us look at the distribution of the random variable of which we assume that $z_{2k+1}$ is a realisation. Define 
\begin{align*}
    T_k = t_{2k+2} - t_{2k}, 
    \qquad d_k = z_{2k+2} - z_{2k} 
    \qquad m_k(t) = \frac{t}{T_k}d_k, 
    \qquad s_k^2(t) = \sigma_m^{2} \frac{t(T_k - t)}{T_k}
\end{align*}
for $t \in [0, T_k]$.\footnote{Note that compared to the discussion in \cite{animalBBMM}, we ensure our time interval for each Brownian bridge $k$ starts in 0. Furthermore, in \cite{animalBBMM} there is a parameter $\delta$ which represents the vector of the locations error in the data. We assume that our data is perfectly correct.} Now the Brownian bridge between $z_{2k}$ and $z_{2k+2}$ looks like $z_{2k} + Z^k$, where $Z^k = \{Z_{t - t_{2k}}^k: t \in [t_{2k}, t_{2k+2}]\}$ with
\begin{equation*}
    Z_t^k \sim N_2(m_k(t), s_k^2(t)\I_2).
\end{equation*}
The observation $z_{2k+1}$ is then a realisation of the random vector $z_{2k} + Z_{t_{2k+1} - t_{2k}}^k$. This random vector has the probability density function
\begin{equation*}
    f_k(z) = \frac{1}{2\pi s_k^2(t_{2k+1} - t_{2k})} \exp \left( -\frac12 \frac{\norm{z - z_{2k} - m_k(t_{2k+1} - t_{2k})}^2}{s_k^2(t_{2k+1} - t_{2k})} \right),
\end{equation*}
see \Cref{def:pdf2}. We assume that all these bridges are independent, so the likelihood function of $\sigma_m$ given the realisations $(z_0, t_0), \dots, (z_{2n}, t_{2n})$ is 
\begin{equation}\label{eq:likelihood}
    L(\sigma_m \mid (z_0, t_0), \dots, (z_{2n}, t_{2n})) = \prod_{k=0}^{n} f_k(z_{2k+1})
\end{equation}
As all parameters in this equation are known except for $\sigma_m^2$, we can use the maximum likelihood estimator to estimate it. To do this we can numerically optimise the likelihood function over values of $\sigma_m^2$. By using the known data points we can estimate $\sigma_m^2$ such that the probability that the odd locations came from the Brownian bridges between the even locations is maximal.

Now that we have estimated $\sigma_m$, we can estimate the path length through equation \ref{eq:pathlengthexpectation}. Here, we have currently fixed $n$ to be number of missing data points. However, this might not be the best choice, as a choice of discretisation is essentially a choice of minimum straight path length. Increasing $n$ will break up a straight segment into a few smaller segments increasing the path length. So, for different types of processes, we might tune the number of points in our discretisation to more accurately model the path length behaviour of the considered process. One might use data for which the path length is known to get a estimate of what $n$ gives a good estimate of the path length. We will not consider this idea further because it goes beyond the scope of this report.

\subsection{Radius of Gyration} \label{subsec:rogtheory}

Consider a dataset of the form $(z_0, t_0), \dots, (z_n, t_n)$, so at time $t_i$ we observe the location $z_i$.We call this a path $P$.\footnote{This notation will be useful later.} Considering the fact that our location data is given as coordinates in a plane, we can model the RoG of $P$ as
\begin{equation}\label{eq:RoG}
    \text{RoG} = \sqrt{\frac{1}{n}\sum_{i=1}^n \norm{z - z_c}^2},
\end{equation}
where $z_c = \bar{z}_n = \frac1n \sum_{i=1}^n z_i$ can be seen as a weighted centre, akin to a centre of gravity.
This definition for the RoG originates from equation (4) in  \cite{9378374}. Note that we have simplified this formula due to our assumption that the location points are coordinates in a plane.
\begin{figure}[H]
    \centering
    \includegraphics[width=0.5\textwidth]{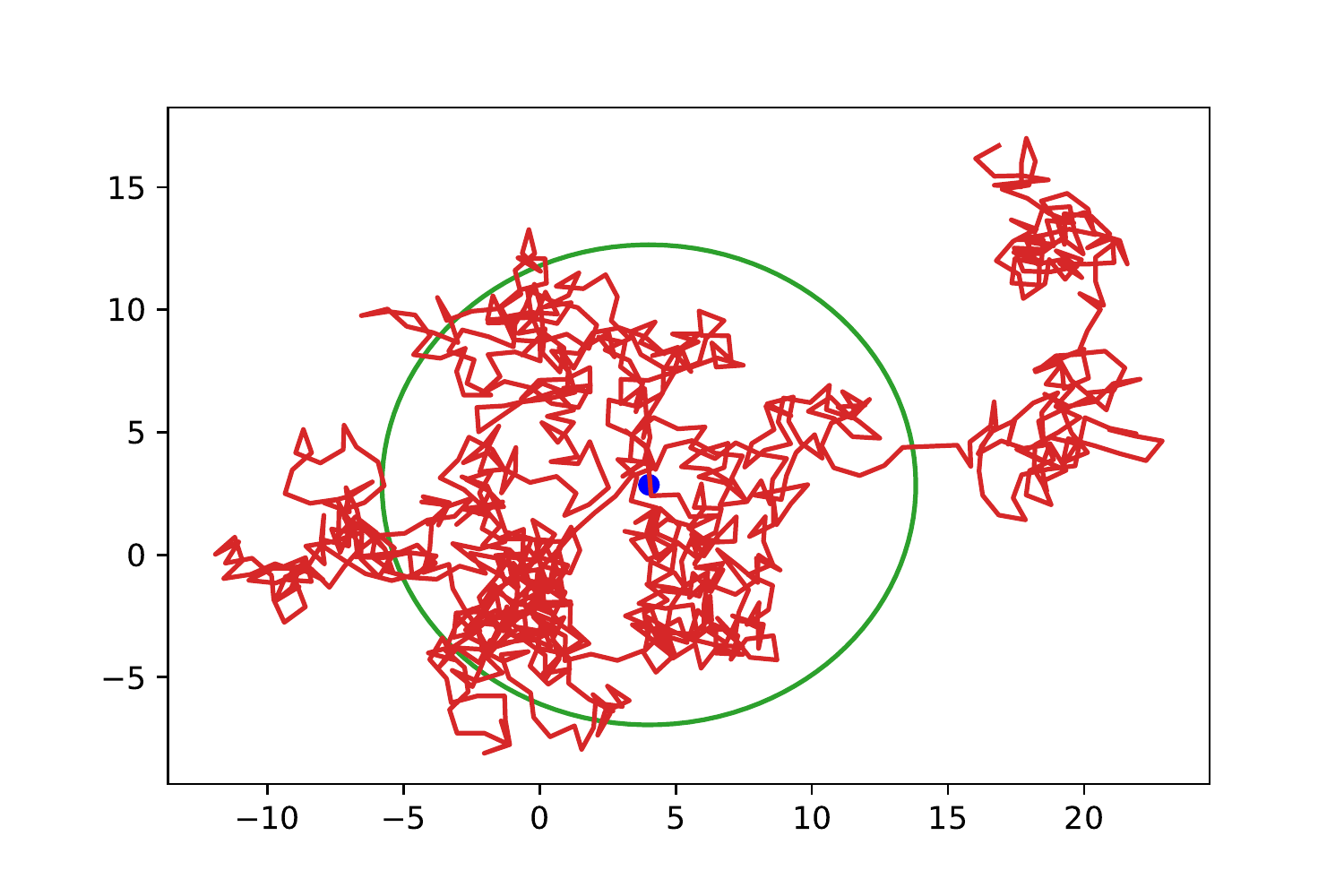}
    \caption{A path generated by the fixed velocity random walk of $1000$ time steps with $\sigma =1$. A circle is plotted with the RoG as a radius. The blue dot is the centre. This circle can be seen as the ``territory'' of the person.}
    \label{fig:rog}
\end{figure}

\clearpage
\section{Results of the Simulation}\label{sec:simulation}

In Section \ref{subsec:implementation} we have discussed the implementation of the Brownian bridge model. After that, we described the setup of our experiments. 
One of the topics that we have discussed is the filling of a gap by a Brownian bridge. A question regarding this topic could for example be: \textit{If we use a Brownian bridge with an estimated diffusion coefficient to fill a gap, is the path length then similar to the original part of the path that is missing?} 
In this section, we will show the results of the experiments. The main idea of these results is to see if our estimator for the path length works compared to linear interpolator (draw a straight line that connects the two sub paths divided by the gap). 

Also recall that in Section \ref{subsec:implementation} and Chapter \ref{sec:theory}, we have shortly described a procedure to estimate the diffusion coefficient assuming that the path follows a Brownian bridge. To determine the correctness of our estimator, we have simulated a thousand Brownian bridges with different variances. For each simulated path, the estimation procedure is used to obtain the most likely estimate for $\sigma_m$. Thereafter, we have used $\sigma_m$ to derive the actual variance $\sigma$. In Figure \ref{fig:graph2}, a plot of the actual variance versus this estimated variance is displayed.
\begin{figure}[h!]
    \centering
    \includegraphics[width=0.6\textwidth]{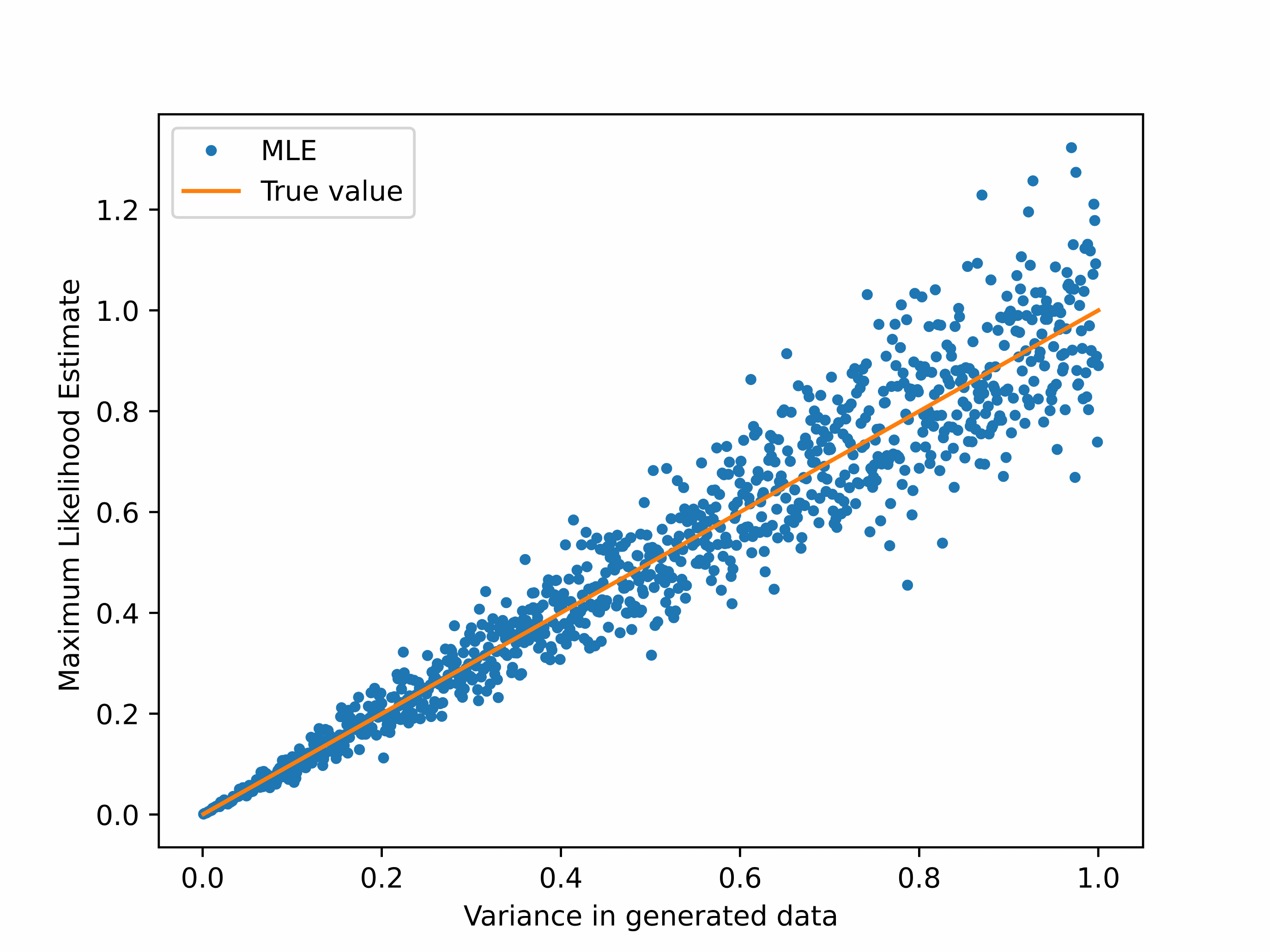}
    \caption{The actual variance $\sigma$ ranges from $0$ to $1$. On the $x$-axis, the actual variance is given. On the $y$-axis, the estimated variance is given. The orange line is drawn as a reference to the perfect estimation or ``true value''. So, if the actual variance is $1$, on the orange line the estimated variance is also $1$. For each actual variance, the blue dots are the estimated variances.}
    \label{fig:graph2}
\end{figure}
We see that our estimation procedure comes quite close to the actual diffusion coefficient. As the diffusion coefficient becomes greater the absolute error seems to increase. 
\subsection{Path Length}
In Section \ref{subsec:experiments} we have fully described our experiment to quantify the accuracy of our estimation procedure. We have performed those simulations and got the results depicted in \Cref{fig:subfigures}. The error of our estimation is shown as a fraction of the actual path length. In particular the estimated path length divided by the actual path length is shown, this means an error of $1$ is a perfect estimation.

In Figure \ref{fig:subfigures}, it can be seen how our estimation procedure and the straight line estimate performed on different kind of datasets. For all described choices of parameters, a box plot is shown with the estimated path length as a fraction of the actual path length. 
\begin{figure}[H] 
    \centering
	\begin{subfigure}[b]
		{0.49\textwidth}
		\centering
		\includegraphics[width=\linewidth]{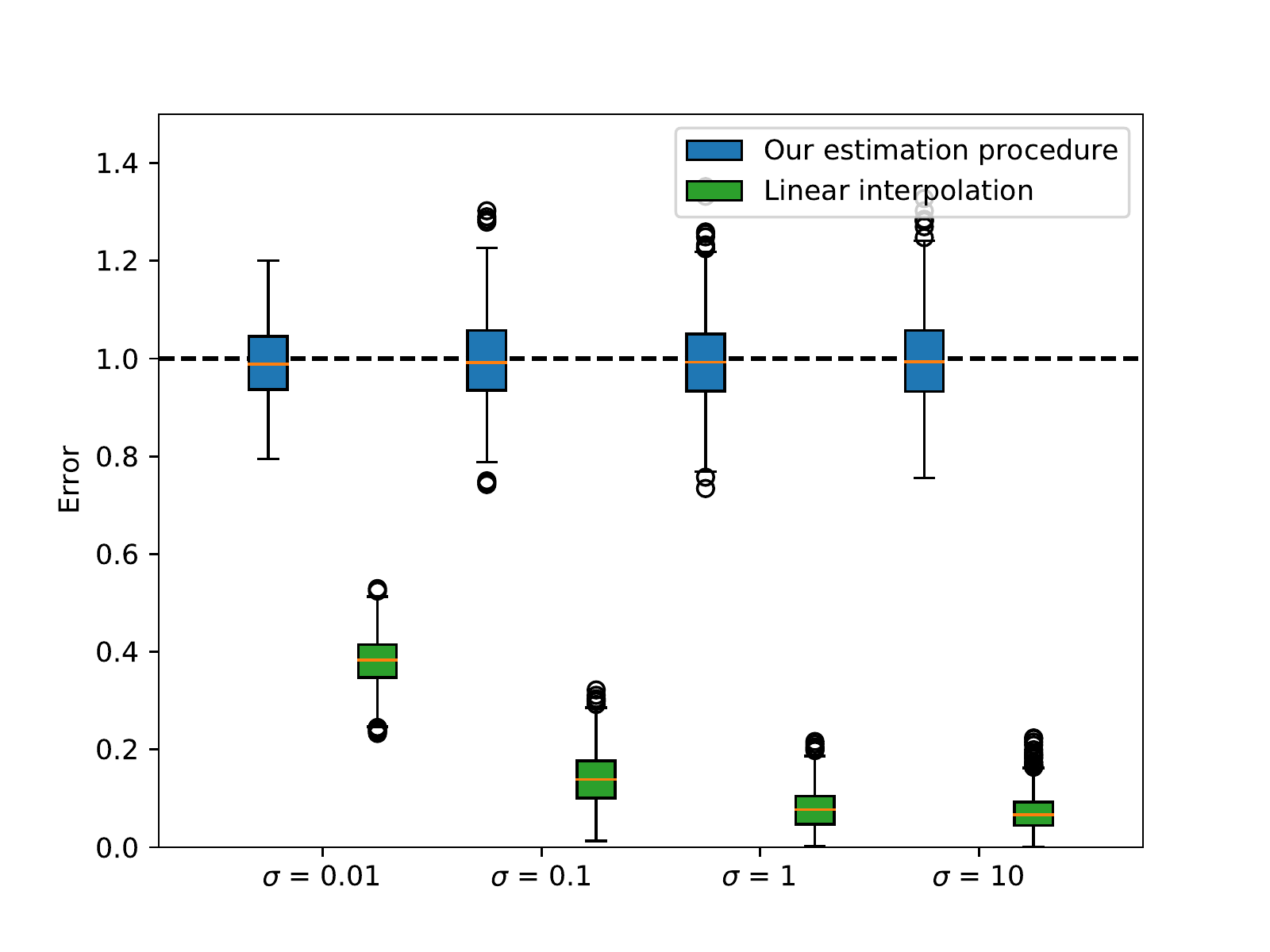}
		\caption{Results for discrete Brownian motion model}
		\label{fig:brownianmotion}
	\end{subfigure}
	\begin{subfigure}[b]
		{0.49\textwidth}
		\centering
		\includegraphics[width=\linewidth]{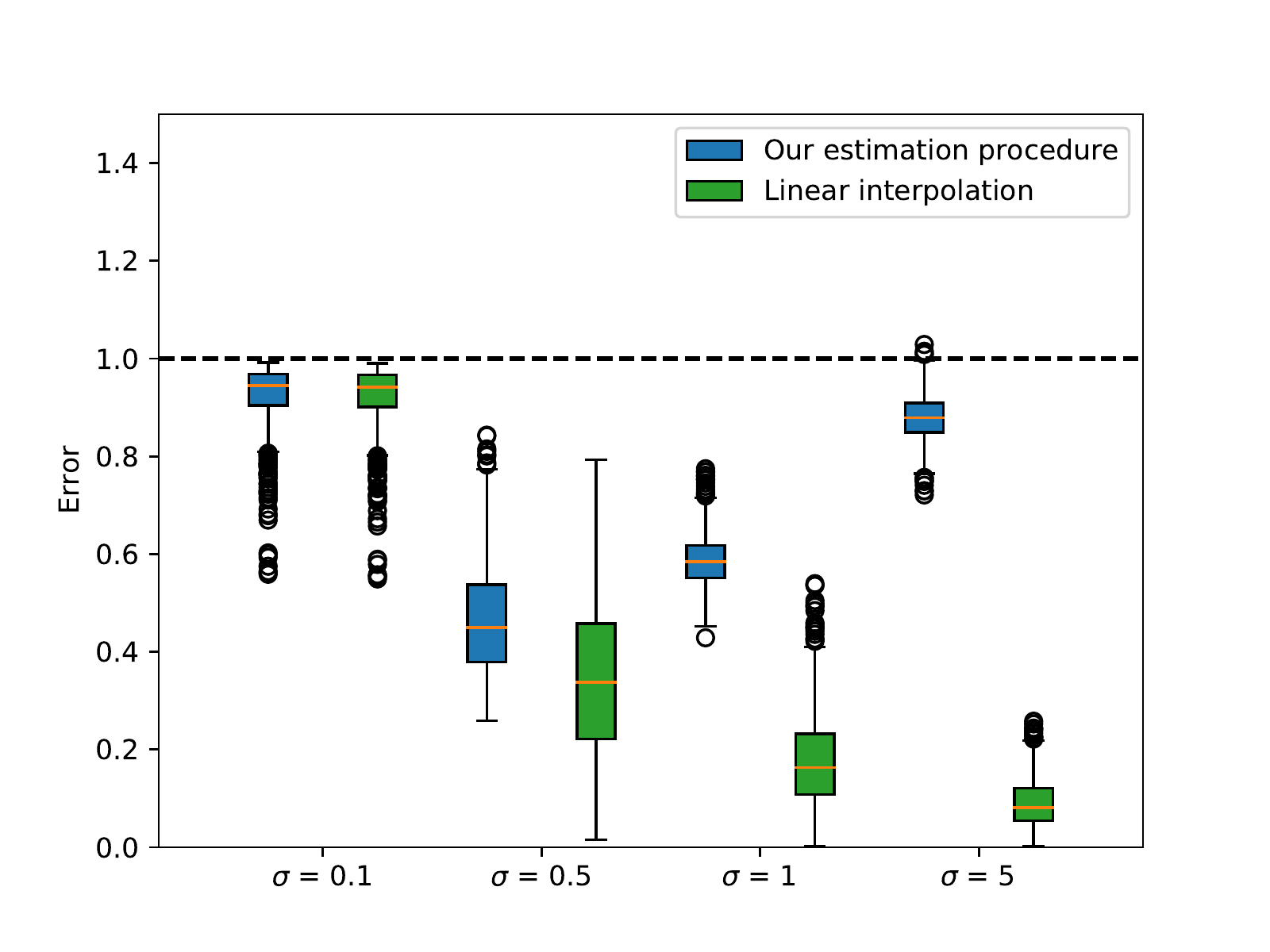}
		\caption{Results for angular random walk model}
		\label{fig:angularrandomwalk}
	\end{subfigure}
	\begin{subfigure}[b]
		{0.49\textwidth}
		\centering
		\includegraphics[width=\linewidth]{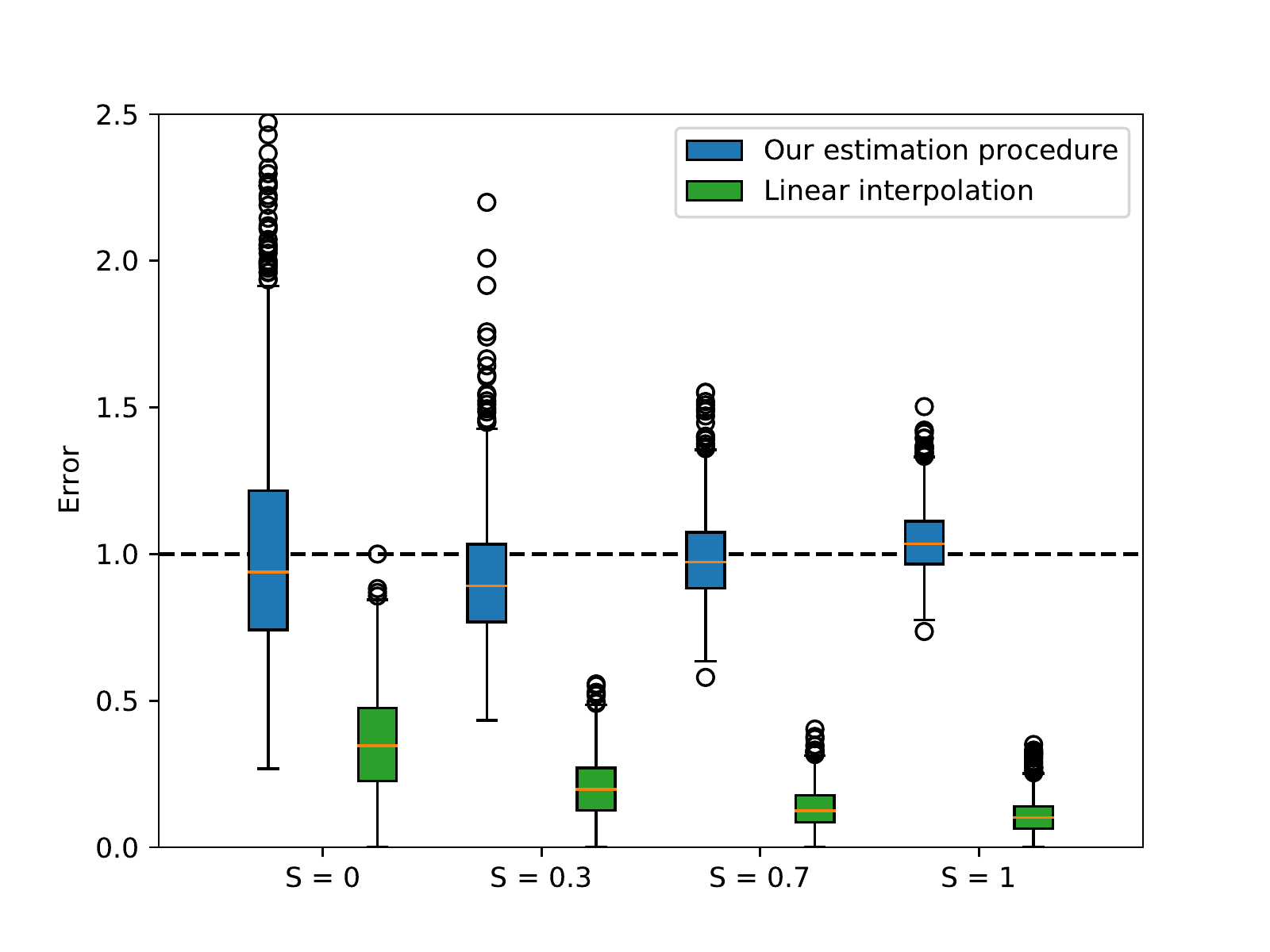}
		\caption{Results for random walk with internal state model.}
		\label{fig:internalstate}
	\end{subfigure}
	\begin{subfigure}[b]
		{0.49\textwidth}
		\centering
		\includegraphics[width=\linewidth]{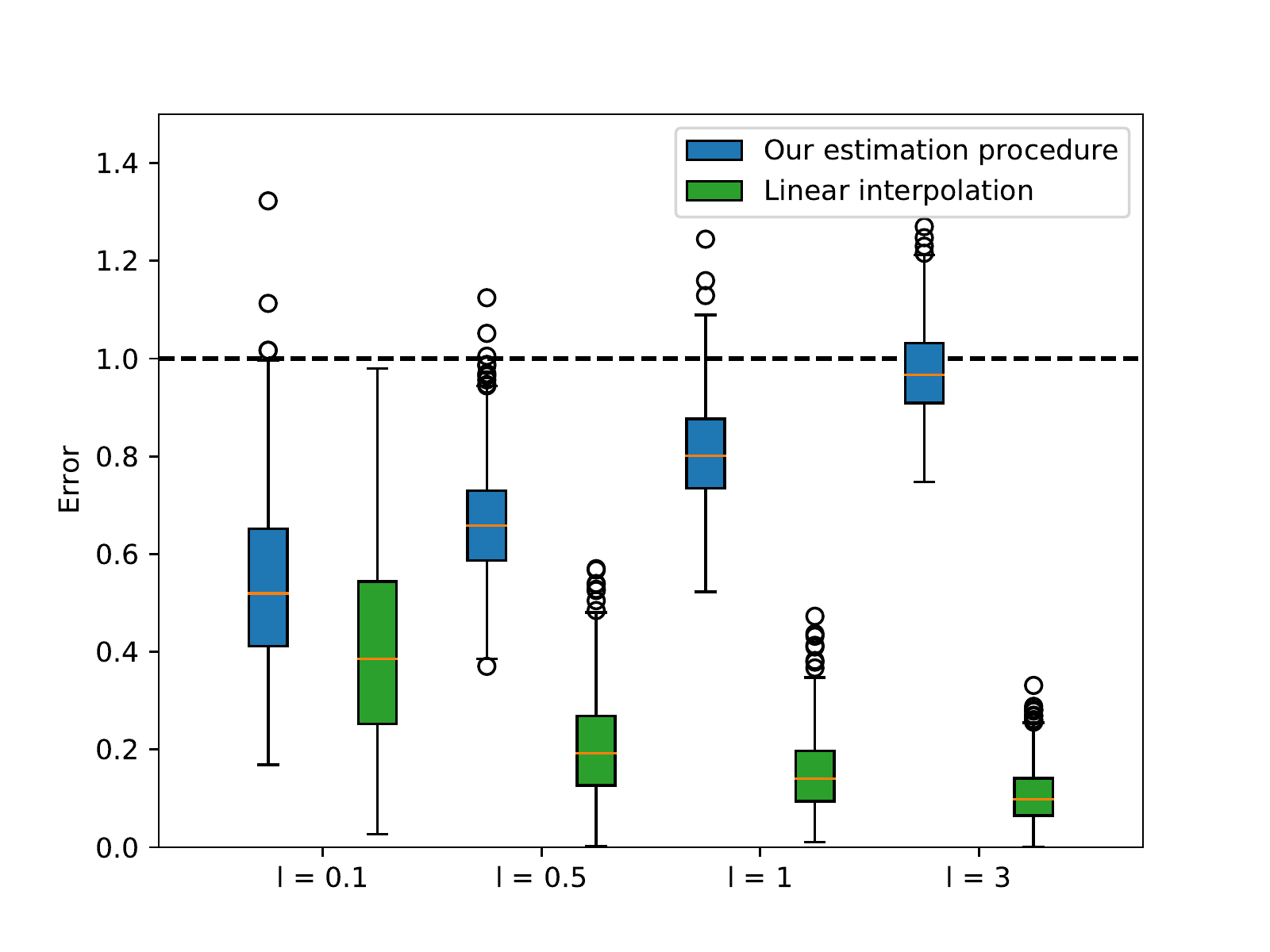}
		\caption{Results for run and tumble model}
		\label{fig:runandtumble}
	\end{subfigure}
	\caption{In these figures it can be seen how our estimation procedure performed compared to the straight-line estimator. In each figure it can be seen on the x-axis what choice of parameters is made and there are two boxplots for each choice. One that describes the estimation from our estimation procedure and one that comes from linear interpolation. For the internal state model there are $19$ outliers not visible for the first boxplot, the greatest outlier is $5.89$.}
	\label{fig:subfigures}
\end{figure}

In \Cref{fig:brownianmotion} we see that our estimator for the path length consistently performs a lot better than the straight line estimator. This is not unexpected, as in our model we assume the travel movement to be a realisation of a Brownian bridge. \Cref{fig:angularrandomwalk} shows that especially for large values of $\sigma$, our estimator performs better. Moreover we also see that our estimation procedure generally underestimates the path length. \Cref{fig:internalstate} looks a lot like \Cref{fig:brownianmotion}, but our estimator applied to random walks generated by the internal state model has more outliers, especially for low values of $S$. Lastly, \Cref{fig:runandtumble} shows a trend where our estimation procedure works better for higher values of $l$. Similarly to the \Cref{fig:angularrandomwalk} our estimation procedure seems to underestimate the path length.

We conclude that, as we expected, linear interpolation always underestimates the path length, and that this underestimation becomes greater for more random behaviour. 

\subsection{Radius of Gyration} \label{subsec:rog}
Similar comparisons have been performed regarding the RoG. First, we will quantify the error of the procedure as
\begin{equation*}
\text{error} = \frac{\text{RoG}(P_{\text{after}})}{\text{RoG}(P_{\text{before}})},\end{equation*} 
where $P_{\text{after}}$ is the path after filling the gap and $P_{\text{before}}$ is the simulated path before the gap. In Figure \ref{fig:rogprocedure}, the results are shown. Every error value has been rounded off to 1 decimal place. Figure \ref{fig:rogprocedure} shows histograms of different data generating models, where for each rounded error the number of simulations (out of 1000) with that certain error can be found. An error of $1$ implies that the RoG has been perfectly estimated. If the error is less than $1$, then the RoG is being underestimated. If the error is greater than $1$, then the RoG is being overestimated. 
\begin{figure}[h!]
     \centering
     \begin{subfigure}[b]{0.30\textwidth}
          \centering
          \includegraphics[width=\textwidth]{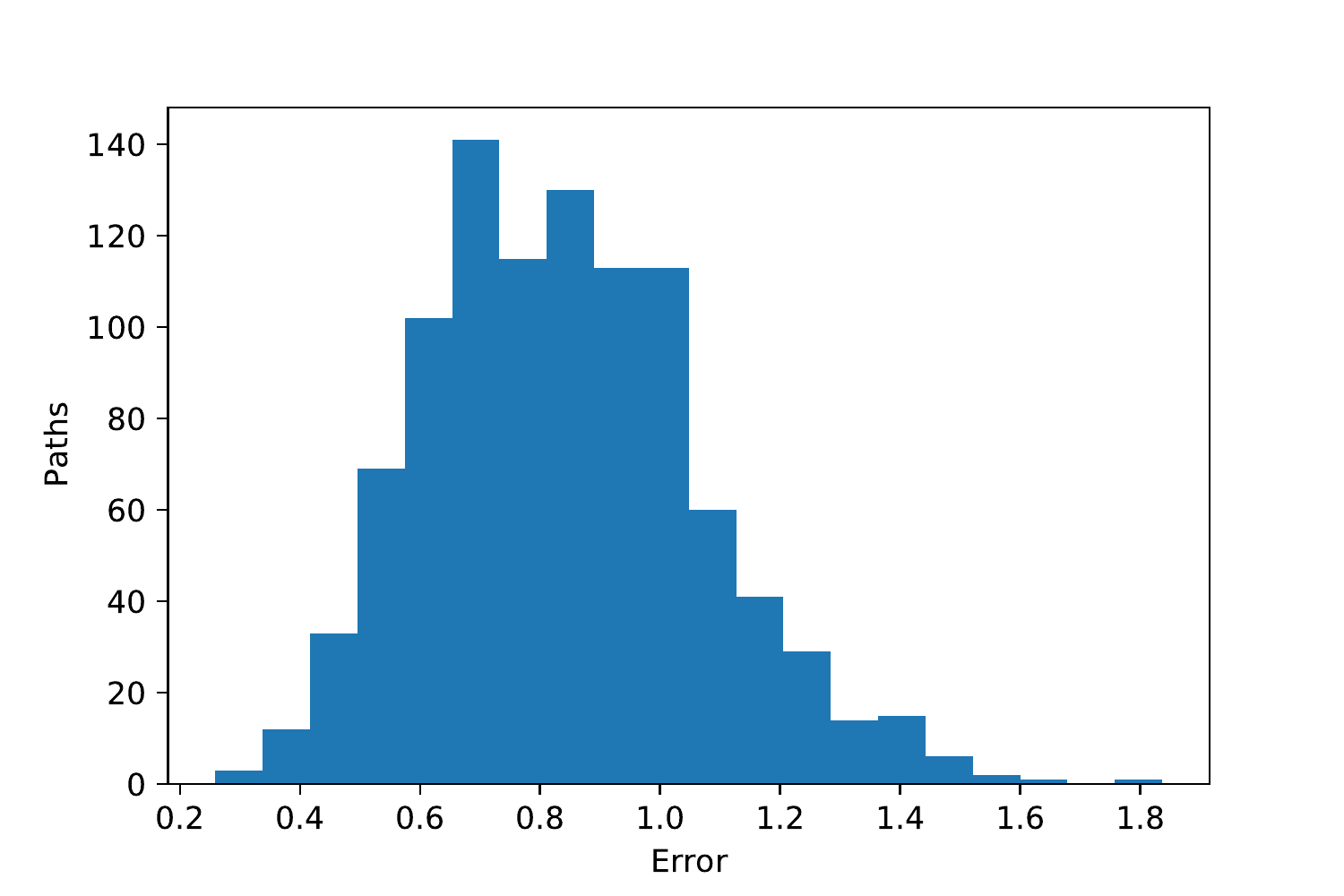}
          \caption{Fixed velocity random walk.}
          \label{fig:roghistofvrw}
     \end{subfigure}
     \hfill
     \begin{subfigure}[b]{0.30\textwidth}
         \centering
         \includegraphics[width=\textwidth]{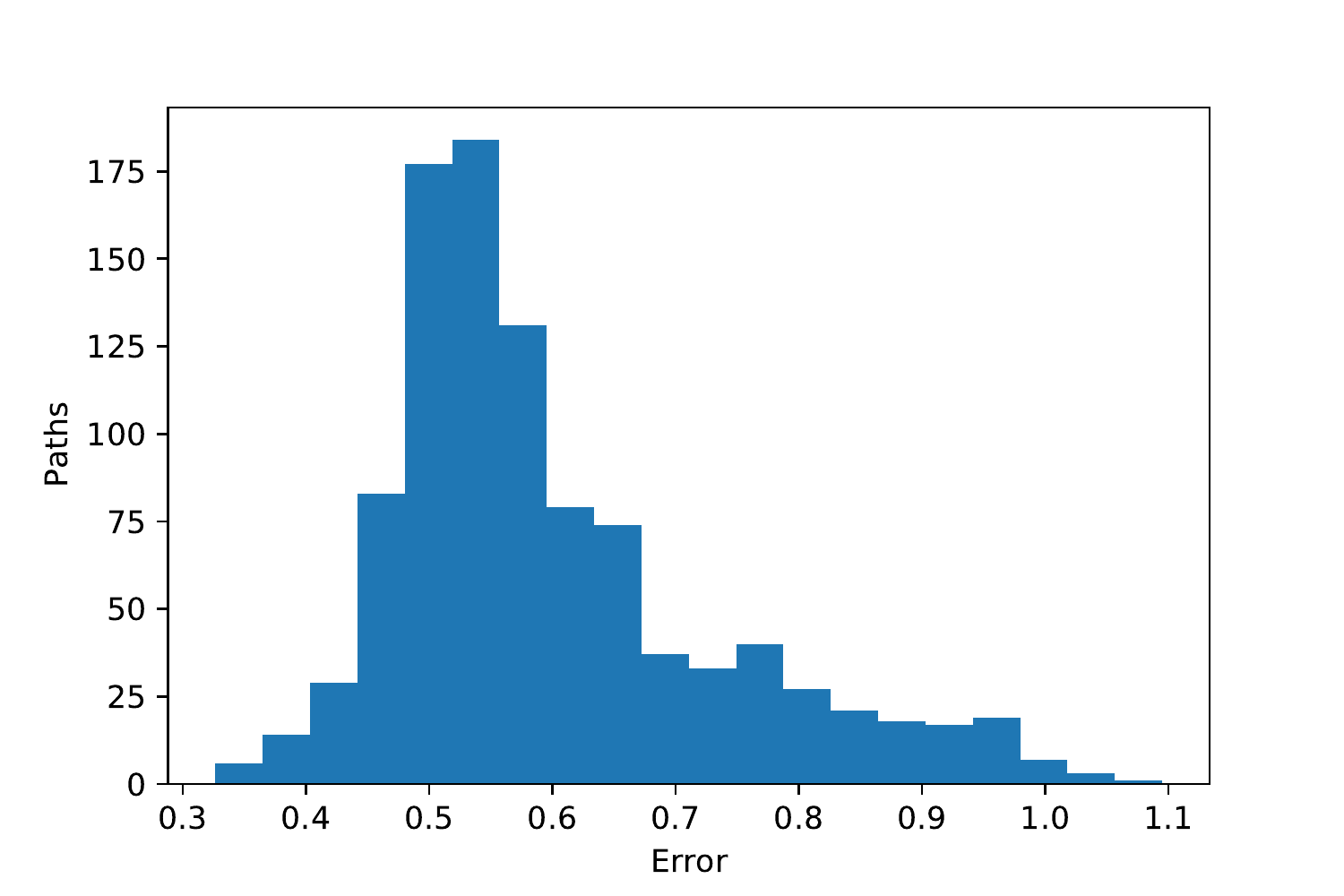}
         \caption{Fixed angular random walk.}
         \label{fig:roghistofarw}
     \end{subfigure}
     \hfill
     \begin{subfigure}[b]{0.30\textwidth}
         \centering
         \includegraphics[width=\textwidth]{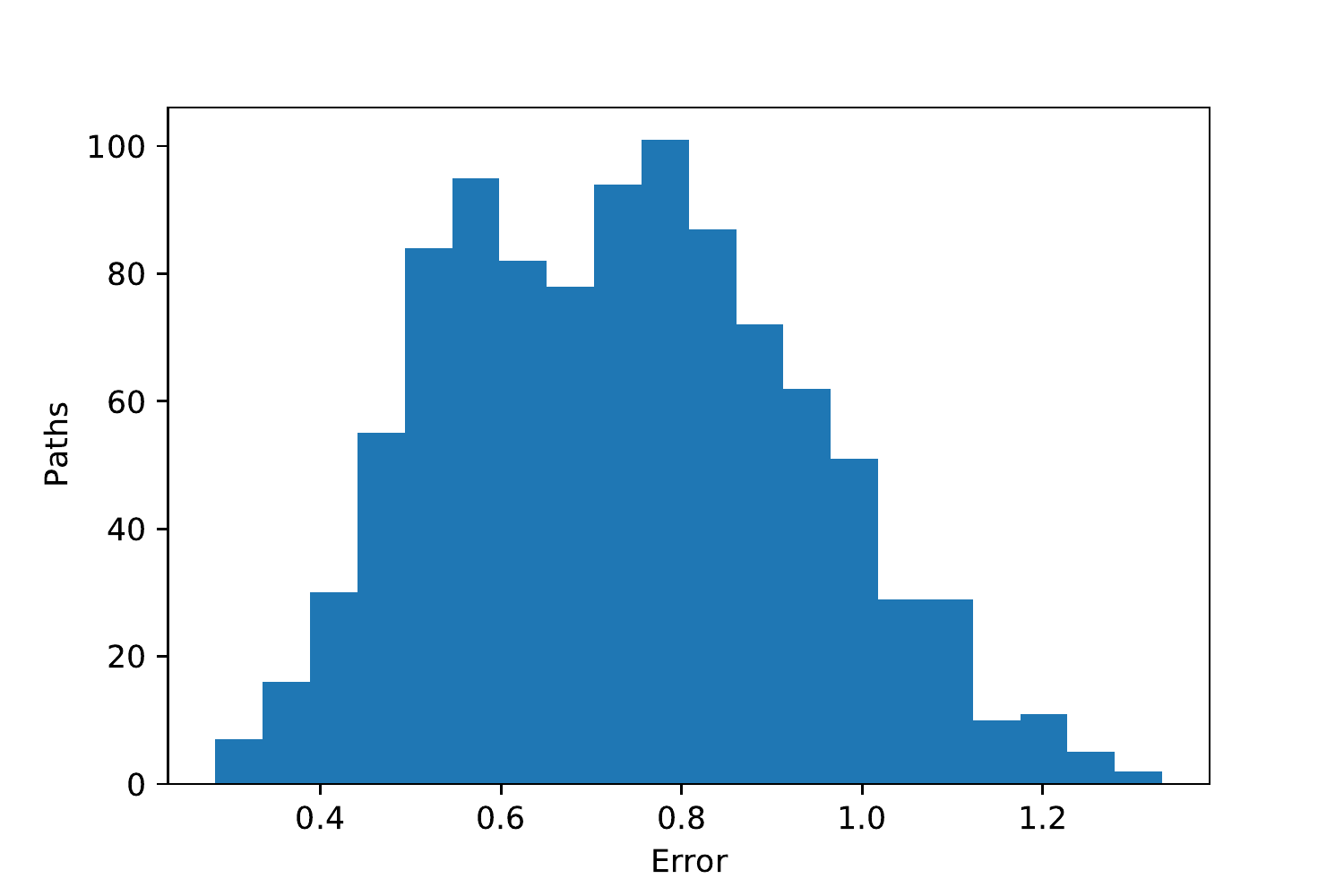}
         \caption{Run-and-tumble.}
         \label{fig:roghistorun}
     \end{subfigure}
    \caption{For every path generating model, a thousand paths have been simulated. For each path, we have omitted the first half of the location data and filled the gap by a Brownian bridge that follows from our estimation procedure. On the $x$-axis one can find the error and on the $y-$axis the number of paths with this error.}
    \label{fig:rogprocedure}
\end{figure}

For each model, the majority of the errors are less than $1$. This implies that we tend to underestimate the RoG. In general, interpolating by the Brownian bridge, gives a path that is more close to the centre point. There are certain outliers, where we have a strong underestimation or overestimation, and these are the cases where the missing part of the path behaves very differently in comparison with the part of the path that is available. Note that the half of the path is deleted and that indeed our data generating models can produce this random behaviour, where in the first half of the path we have a movement that is similar to a straight line and in the second half it could be that the path deviates a lot from the straight line. 

In \Cref{tab:rogresults}, we summarise these results. Again, one can see that on average, the RoG is being underestimated. We approach the paths that are obtained from the fixed velocity random walk the best in terms of RoG. However, there is a large deviation from the mean. Again, this shows how difficult it is to capture a trend when half of the path is missing. 
\begin{table}[h!]
    \centering
    \begin{tabular}{|c|c|c|c|c|}
    \hline
    & Mean $\text{RoG}(P_{\text{before}})$ & Mean $\text{RoG}(P_{\text{after}})$ & Mean error & Std. dev.\\
    \hline
    Fixed velocity & 12.300 & 9.807 & 0.841c& 0.231 \\
    \hline
    Fixed angular & 194.250 & 112.265 & 0.600 & 0.135 \\
    \hline
    Run-and-tumble & 25.693 & 18.130 & 0.737 & 0.198 \\
    \hline
    \end{tabular}
    \caption{The mean $\text{RoG}(P_{\text{before}})$, $\text{RoG}(P_{\text{after}})$, error and standard deviation of the errors over the $1000$ paths per model. Recall that the error is not a difference but a fraction that measures the $\text{RoG}(P_{\text{after}})$ relative to the actual RoG, i.e. $\text{RoG}(P_{\text{before}})$.}
    \label{tab:rogresults}
\end{table}

In the same way, we have obtained results by using linear interpolation to fill the gap, instead of our estimation procedure. In \Cref{fig:rogstraightline} these results are shown in similar fashion as above. 

\begin{figure}[H]
     \centering
     \begin{subfigure}[b]{0.30\textwidth}
          \centering
          \includegraphics[width=\textwidth]{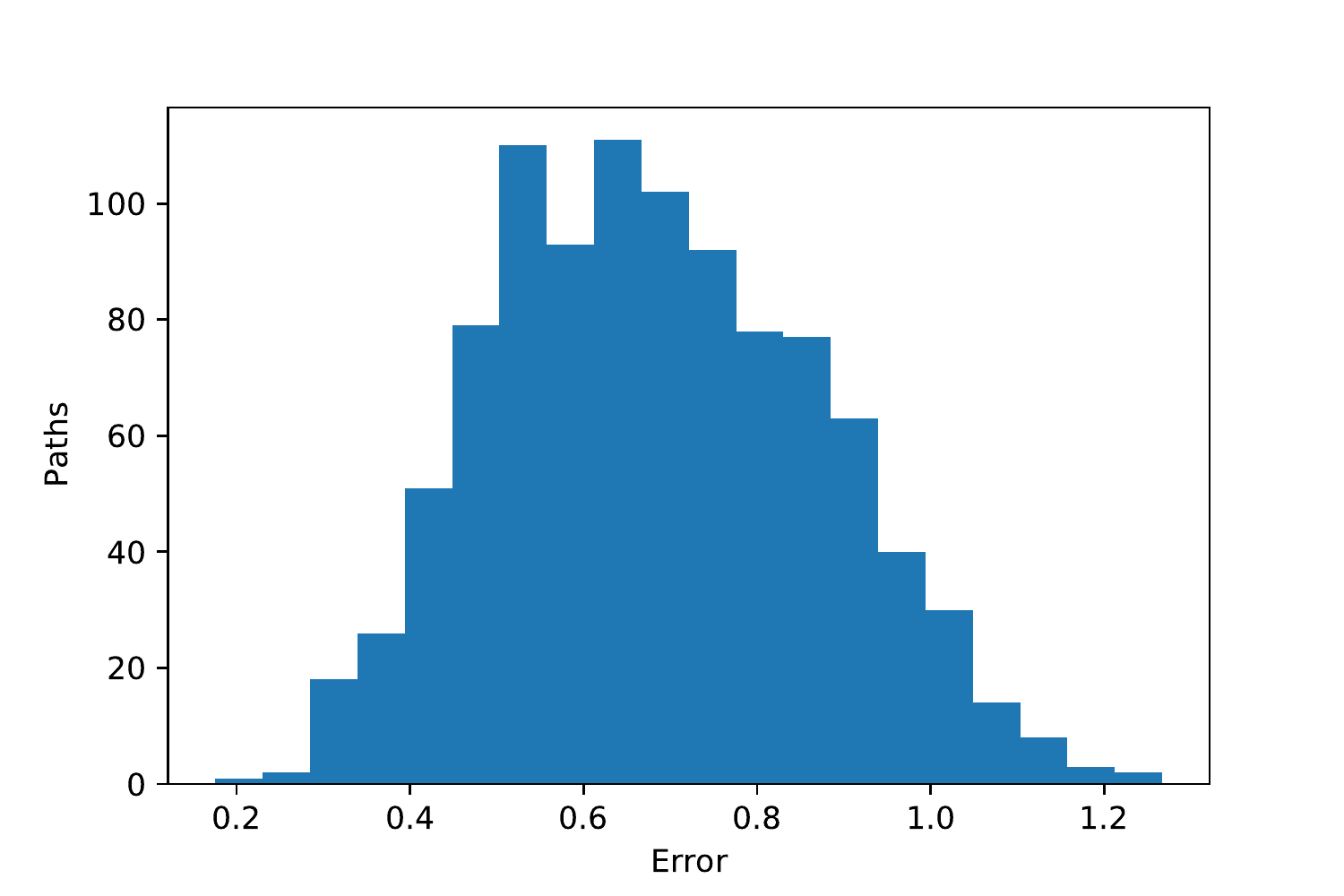}
          \caption{Fixed velocity random walk.}
          \label{fig:roghistofvrwstraight}
     \end{subfigure}
     \hfill
     \begin{subfigure}[b]{0.30\textwidth}
         \centering
         \includegraphics[width=\textwidth]{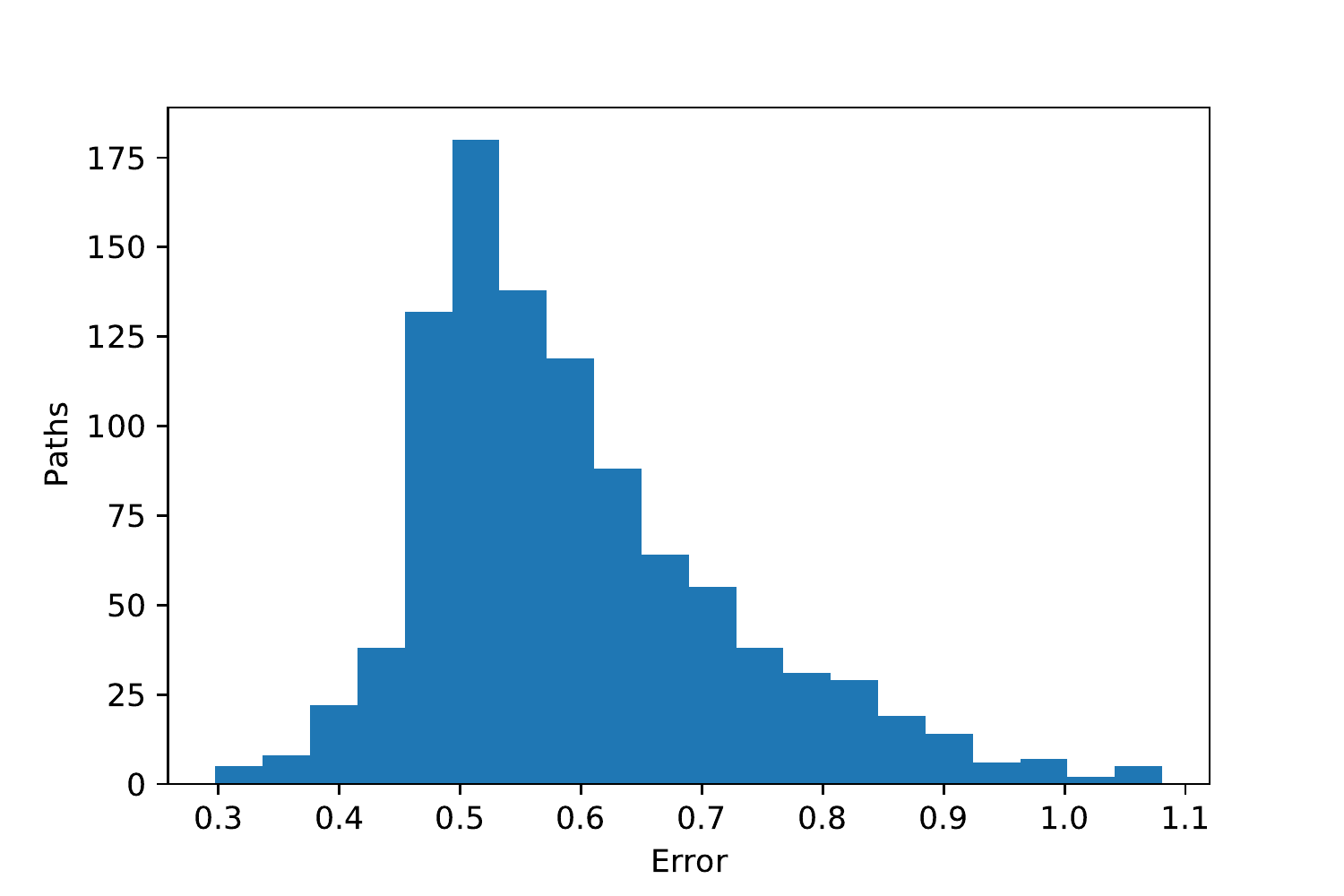}
         \caption{Fixed angular random walk.}
         \label{fig:roghistofarwstraight}
     \end{subfigure}
     \hfill
     \begin{subfigure}[b]{0.30\textwidth}
         \centering
         \includegraphics[width=\textwidth]{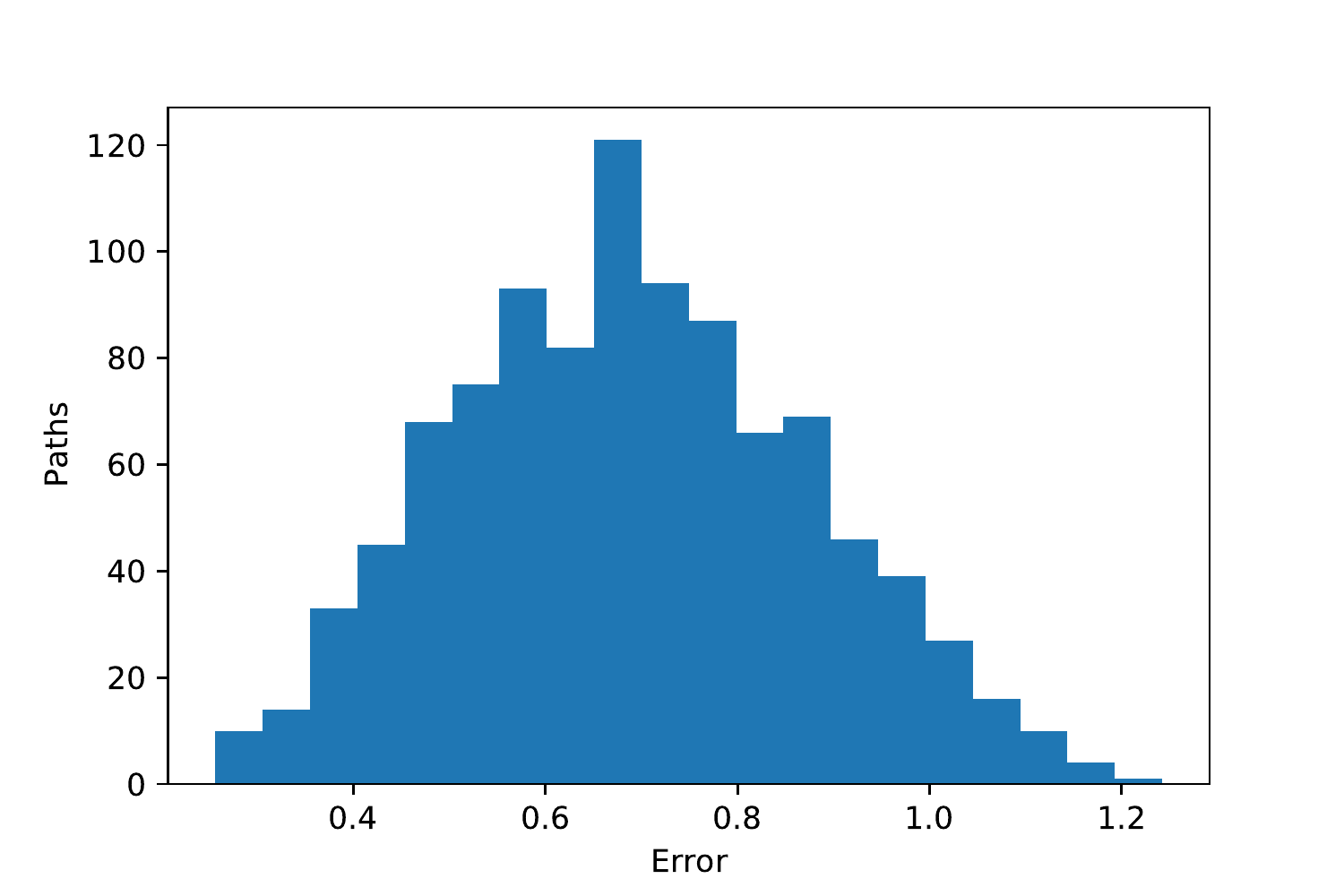}
         \caption{Run-and-tumble.}
         \label{fig:roghistorunstraight}
     \end{subfigure}
    \caption{For every path generating model, a thousand paths have been simulated. For each path, we have omitted the first half of the location data and filled the gap by a straight line from the origin to the end of the gap. On the $x$-axis one can find the error and on the $y-$axis the number of paths with this error. }
    \label{fig:rogstraightline}
\end{figure}

For each model, linear interpolation also tends to underestimate the RoG. However, we can also see that there is less overestimation. Besides this, the overestimation is limited to an error of $1.2$. 

In \Cref{tab:rogresultsstraight}, we summarise these results. Again, one can see that on average, the RoG is being underestimated. We approach the paths that are obtained from the run-and-tumble model the best in terms of RoG. However, for each model, the mean RoG after linear interpolation is less than the mean RoG after interpolating with a Brownian bridge. Also note that the deviation of the errors is less, so there is more stability in this case.  

\begin{table}[h!]
    \centering
    \begin{tabular}{|c|c|c|c|c|}
    \hline
    & Mean $\text{RoG}(P_{\text{before}})$ & Mean $\text{RoG}(P_{\text{after}})$ & Mean error & Std. dev.\\
    \hline
    Fixed velocity & 12.319 & 8.115 & 0.684 & 0.188 \\
    \hline
    Fixed angular & 196.910 & 113.531 & 0.594 & 0.130 \\
    \hline
    Run-and-tumble & 25.437 & 16.933 & 0.690 & 0.184 \\
    \hline
    \end{tabular}
    \caption{The mean $\text{RoG}(P_{\text{before}})$, $\text{RoG}(P_{\text{after}})$, error and standard deviation of the errors over the $1000$ paths per model. Note that $P_{after}$ in this context means: path after filling the gap by linear interpolation.}
    \label{tab:rogresultsstraight}
\end{table}

\clearpage
\section{Discussion}\label{sec:discussion}

\subsection{Interpretation of the Theoretical Results}

Suppose we have a dataset 
\begin{equation*}
    (z_0, t_0), \dots, (z_i, t_i), (z_{i+j+1}, t_{i+j+1}), \dots, (z_k, t_k),
\end{equation*}
so the data points $(z_{i+1}, t_{i+1}), \dots, (z_{i+j}, t_{i+j})$ are missing. Our first step would be to calculate an estimator for $\sigma_m$ based on the data that we do have, as explained in Section \ref{subsec:implementation}. Now, we have $T = t_{i+j+1} - t_i$ and $d = z_{i+j+1} - z_i$. For now, let us choose our discretisation such that we simulate exactly one point for each missing data point, so $n =T$. In this case, an estimator for the expected path length of the gap is given by
\begin{equation*}
    \hat{L} = \hat\sigma_m\sqrt{T(T-1)}\sqrt{\frac{\pi}{2}}L_{\frac{1}{2}}\left(-\frac{\norm{d}^{2}}{2\hat\sigma_{m}^{2}T(T-1)}\right), 
\end{equation*}
where $L_{\frac12}$ is a Laguerre function \cite{laguerre}. As we have seen in Section \ref{subsec:pathlength}, this estimator is bounded from below by $\norm{d}$. We think that $\hat{L}$ can be a more accurate estimator for the length of the path taken in the gap than $\norm{d}$. 

Note that it can be quite hard to estimate $\sigma_m$. First, one needs to make sure that the properties of the travel movement stay the same for all the points you use to estimate $\sigma_m$, and the points that are missing. This also means that in principle it does not matter how large the gap is, as long as the properties of the travel movement stay the same.

It can also be quite computationally expensive to calculate $\sigma_m$ for every single gap one encounters. Therefore, one could do empirical research into what value of $\sigma_m$ belongs to which travel movement. A couple of parameters $\sigma_m$ could depend on are the following:
\begin{itemize}
    \item Mode of transportation (walking, cycling, riding a car, taking the train, et cetera)
    \item Demographic properties of the respondent: age, medical condition, marital status, owning a car, owning a dog, address, job (and therefore educational background)
    \item Area of the travel movement (rural area, city centre, suburban, etc.)
    \item Temporal factors (time of day and day of the week)
    \item The weather
    \item Motivation and mood of traveller.
\end{itemize}

\subsection{Interpretation of the Experimental Results}

As expected the estimating procedure gives a great accuracy for estimation of data that was realised from the discrete Brownian motion model. This should not be a surprise and mainly shows that our estimating procedure is correctly implemented. Nonetheless, this is a natural form of motion that can certainly appear, so this is already a good sign for our procedure. The decrease of accuracy that the straight line estimation has for more chaotic behaviour does not appear for the accuracy of our procedure.

For the angular random walk model we see that there is a local minimum for the accuracy of our procedure. This mainly seems to be because our procedure increases in precision when more chaotic behaviour appears. Moreover almost straight line movement is also relatively easy to detect and estimate. There seem to be quite a few outliers for both our procedure and the straight line estimate. An explanation for these outliers was not found, but as there were $1000$ simulations some outliers can be expected.

For the random walk with internal state model we again see that our procedure seems to work a lot better than the straight line estimate. It is interesting however that this seems to be the only model where our procedure makes big overestimations. In the most extreme of the $1000$ simulations it overestimated the distance travelled by a factor $5.89$. It also seems to be that the straight line estimation does well on these outlier cases. A complete explanation for these outliers cannot be given at the moment, but this does correspond with the idea that this model has different possible states. If the path was moving around before and after the gap but had a long pause in the middle our estimation procedure does not factor in this idea of another state. 

For the run-and-tumble model it looks like our procedure does not work that well with this structure of behaviour. It did quite poorly for small variances and little chaos. As less of the run and tumble structure is visible because of higher variances and more random behaviour we see that our procedure starts to perform better. In all cases it did still outperform the straight line approach on average. 

Besides the path length, we have also obtained results regarding the RoG. In these results, we have defined an error that quantifies how well the RoG is being estimated. Again, our estimation procedure was compared with linear interpolation. On average, the estimation procedure seems to underestimate the RoG as the error is on average less than $1$. For the fixed velocity random walk, we see that the mean error comes closest to $1$. Therefore, the RoG is estimated the best for the fixed velocity random walk. The reason for this could be that the movement is more similar to the Brownian bridge. Intuitively, it has the same deviation from its expected path. Another aspect is the size of the RoG. It could also be that a larger RoG is harder to estimate than a smaller RoG. In other words, a path that has more points far away from the centre point is more difficult to estimate. This is in line with our intuition, as paths with unexpected behaviour are in general more difficult to comprehend. 

In conclusion, our estimates seem to work fine on paths that are modelled by some stochastic process. We have not looked at empirical data. However, we believe that our procedure can also be useful for such data. The gaps can be filled by the Brownian bridge with an estimated diffusion coefficient, and we can see that the path length of this Brownian bridge can be realistic: it comes close to the underlying path of the gap. For the radius of gyration, there are more dependencies which should be examined. One could also look at the RoG of the incomplete path and use this as a parameter for estimating the diffusion coefficient. This way, it could also be possible to tackle incomplete paths with a larger RoG.

\subsection{Possible Flaws in the Experiment}

Quite an important difficulty in the test of our estimating procedure is that we were not able to use real location data of actual travellers. All of our experimentation has been performed on artificial data, which we could not compare to real life data, so we cannot say what kind of models depict real life best. For example, we did not model change in mode of transport, but we do expect this quite regularly in real-life travel movements. A person walking to their car and then driving gives quite a different estimation for their diffusion coefficient in time. In our procedure this is averaged over all available data points which gives a less complete picture. Moreover, there are a lot of reasonable assumptions you can make for real life movement, such as a velocity or turn-frequency dependent on the mode of travel. Our procedure does not take into account any of such assumptions.

Another more theoretical problem with our procedure is that Brownian motion is highly dependent on the discretisation of the model. If we would take a finer discretisation, the expectation value of the path length would increase. Our procedure is bases on a discretisation with one point every second. This is an arbitrary choice that does match up with actual measurements. One way to avoid this problem would be to use our estimation procedure to calculate the expected length for the intervals that we do have complete data of. Then we can use the error from these known intervals to tweak the expected value for the gap.

Our procedure works by estimating the variance as if the given movement was Brownian motion. The variance could have been estimated differently, for example dependent on the scale and discretisation.

\clearpage
\section{Conclusion and Outlook}\label{sec:conclusion}

In summary, we have used the Brownian Bridge Movement Model to model a person's movement and to fill up gaps between measurements. Through a maximum likelihood estimator the relevant parameter is can be estimated. By discretisation of a Brownian bridge, we have found a distribution for the path length. This distribution is given by Equation \ref{eq:discpathlength}. 

To test the Brownian bridge model, we have applied it to gaps in four numerically simulated processes meant to mimic travel data. Then, we estimated two properties of the simulated paths. The length of a part of the path and the radius of gyration. 
From these numerical results we can conclude we consistently estimate path length better in comparison to a straight line. This means that when we would use the Brownian bridge as a way to fill a gap, it would have a better estimate of the length of the sub-path (that has been deleted) than a straight line procedure. This is a meaningful result, as it could indicate that Brownian bridge interpolation is more realistic than linear interpolation. 

We have also measured the estimation procedure in terms of RoG. Again, we have compared the estimation procedure with linear interpolation and the results show that the estimation procedure has a better estimate for the RoG than that of a straight line. However, we still strongly underestimate the actual RoG. Thus, there seems to be a small improvement, but there is still room for more. 

\subsection{Outlook}

First of all, due to privacy regulations we could not access any data from actual people travelling, so it is natural to suggest for future work the CBS verifies our results with their data.

The CBS could also look into the sample rate the app uses.
Perhaps equally good results can be achieved by lowering the sample rate, or even making the rate dynamic.
For instance, a high sample rate could be used in cases where the diffusion coefficient is high, meaning the travel behaviour is less predictable, and a low rate in cases where the travel behaviour is very predictable and has a low diffusion coefficient.
A consequence of this could be that the app uses less battery power, making the amount of data per survey increase both because the internal software is less likely to shut the app to preserve battery power and because respondents are more likely to continue with the survey if it is less of a battery strain.

In particular, future work can be done on estimating the diffusion coefficient in various cases. For instance, empirical data can be used to decide on which variables the diffusion coefficient depends. Possible variables could be the mode of transportation, the demographic properties of the respondent, the area of the travel movement, temporal factors or the weather.

\clearpage
\bibliographystyle{plain}
\bibliography{bibfile}
\addcontentsline{toc}{section}{References}

\clearpage
\appendix
\section{Useful Mathematical Background}\label{app:background}

In this section, several definitions and theorems that were used in the report are stated. 

\begin{proposition}[Sum of two independent normal random variables]\cite[pg. 158]{IKSboek} \label{prop:sumnormals}
Let $X,Y$ be two independent random variables where $X \sim N(\mu_1,\sigma^{2}_1)$ and
$Y \sim N(\mu_2,\sigma^{2}_2)$. Then the sum $X+Y$ is again normally distributed: 
\begin{equation*}
    X + Y \sim N(\mu_1 + \mu_2, \sigma^{2}_1 + \sigma_{2}^{2}). 
\end{equation*}
\end{proposition}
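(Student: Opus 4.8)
The plan is to prove this via moment generating functions (MGFs), exploiting the fact that independence turns the MGF of a sum into a product. First I would recall that for a normal variable $W \sim N(\mu, \sigma^2)$ the MGF is $M_W(t) = \E[e^{tW}] = \exp(\mu t + \tfrac12 \sigma^2 t^2)$, valid for all $t \in \R$; this is a standard computation that reduces to completing the square inside the Gaussian integral. Establishing this formula once is the only genuine calculation needed.

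Next, since $X$ and $Y$ are independent, so are $e^{tX}$ and $e^{tY}$, and therefore $\E[e^{t(X+Y)}] = \E[e^{tX}]\,\E[e^{tY}]$. Substituting the two normal MGFs and multiplying gives
\begin{equation*}
M_{X+Y}(t) = \exp\!\left(\mu_1 t + \tfrac12 \sigma_1^2 t^2\right)\exp\!\left(\mu_2 t + \tfrac12 \sigma_2^2 t^2\right) = \exp\!\left((\mu_1+\mu_2) t + \tfrac12 (\sigma_1^2+\sigma_2^2) t^2\right),
\end{equation*}
which is exactly the MGF of a $N(\mu_1+\mu_2,\ \sigma_1^2+\sigma_2^2)$ random variable.

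Finally I would invoke the uniqueness theorem for MGFs: if two distributions have MGFs that agree and are finite on an open interval containing $0$, the distributions coincide. Because the normal MGF is finite for every real $t$, this theorem applies, and we conclude $X+Y \sim N(\mu_1+\mu_2,\ \sigma_1^2+\sigma_2^2)$.

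The main obstacle — really the only delicate point — is the appeal to uniqueness: one must be certain the MGF determines the distribution, which is precisely why it matters that the normal MGF is finite in a whole neighbourhood of the origin rather than merely at $t = 0$. An alternative route that sidesteps this theorem is to compute the density of $X+Y$ directly as the convolution $f_{X+Y}(z) = \int_{\R} f_X(x)\, f_Y(z - x)\, dx$; there the effort shifts entirely into a completing-the-square manipulation of the exponent, yielding a Gaussian integral in $x$ whose value leaves behind exactly the $N(\mu_1+\mu_2,\ \sigma_1^2+\sigma_2^2)$ density. This is more computationally involved but fully self-contained, so it is the natural fallback if one wishes to avoid quoting the uniqueness result.
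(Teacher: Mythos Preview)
Your proof via moment generating functions is correct and entirely standard; the only subtlety you flag (the appeal to the uniqueness theorem for MGFs) is handled properly since the normal MGF is finite on all of $\R$. Note, however, that the paper does not actually supply a proof of this proposition: it is stated with a citation to a textbook and used as a known result, so there is nothing to compare against --- your argument simply fills in what the paper takes for granted.
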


\begin{definition}
[Probability density function of a bivariate normal distributed random variable]\label{def:pdf2} A random vector $Z$ with bivariate normal distribution $N_2(\mu, \sigma^2\I_2)$ has the probability density function
\begin{equation*}
    f: \R^2 \to \R, \;\;\; z \mapsto \frac{1}{2\pi\sigma^2} \exp \left( -\frac{\norm{z - \mu}^2}{2\sigma^2} \right).
\end{equation*}
\end{definition}

\begin{proposition}[Distribution of two independent marginal normal random variables]\label{prop:distrnormals}
Let $X,Y$ be two independent random variables where $X \sim N(\mu_1,\sigma^{2}_1)$ and
$Y \sim N(\mu_2,\sigma^{2}_2)$.
Then the vector $(X, Y)$ has a bivariate normal distribution $N_2(\mu, \sigma^2\I_2)$, where 
\begin{equation*}
    \mu = \begin{pmatrix}
        \mu_1\\
        \mu_2
    \end{pmatrix} \quad \text{ and } \quad \sigma^{2} = \begin{pmatrix}
        \sigma_1^{2}\\
        \sigma_2^{2}
    \end{pmatrix}.
\end{equation*}
\end{proposition}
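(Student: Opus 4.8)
The plan is to exploit the independence of $X$ and $Y$ to factorise their joint density, and then recognise the resulting product as a bivariate normal density. Since $X$ and $Y$ are independent, the joint density of $(X, Y)$ is the product of the two marginal densities, so
\begin{equation*}
    f_{X,Y}(x, y) = \frac{1}{\sqrt{2\pi}\sigma_1}\exp\left(-\frac{(x - \mu_1)^2}{2\sigma_1^2}\right) \cdot \frac{1}{\sqrt{2\pi}\sigma_2}\exp\left(-\frac{(y - \mu_2)^2}{2\sigma_2^2}\right).
\end{equation*}
First I would combine the two exponentials into a single exponent and the two normalising constants into $1/(2\pi\sigma_1\sigma_2)$. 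The key observation is that the combined exponent can be written as a quadratic form $-\frac12 (z - \mu)^\top \Sigma^{-1}(z - \mu)$ with $z = (x, y)^\top$, $\mu = (\mu_1, \mu_2)^\top$ and $\Sigma = \operatorname{diag}(\sigma_1^2, \sigma_2^2)$, while $\sigma_1\sigma_2 = \sqrt{\det\Sigma}$. Matching this against the standard form of the multivariate normal density then identifies $(X, Y)$ as a bivariate normal vector with the stated mean and (diagonal) covariance.

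Alternatively, and perhaps more in keeping with the tools already available, one could use the characterisation that a random vector is multivariate normal precisely when every linear combination of its components is univariate normal. For arbitrary $a, b \in \R$, scaling gives $aX \sim N(a\mu_1, a^2\sigma_1^2)$ and $bY \sim N(b\mu_2, b^2\sigma_2^2)$, and these remain independent. Then \Cref{prop:sumnormals} immediately yields $aX + bY \sim N(a\mu_1 + b\mu_2,\, a^2\sigma_1^2 + b^2\sigma_2^2)$, which is normal for every choice of $a, b$. This establishes bivariate normality; the mean vector follows from linearity of expectation, and the off-diagonal covariance entry vanishes because $\operatorname{Cov}(X, Y) = 0$ for independent variables, giving the diagonal covariance matrix.

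There is no genuinely hard step here---both routes are short---but the main point to get right is the match with the definition of a bivariate normal used in the paper. \Cref{def:pdf2} only records the density of $N_2(\mu, \sigma^2\I_2)$ for a \emph{scalar} $\sigma^2$, i.e.\ equal variances, whereas the proposition allows $\sigma_1^2 \neq \sigma_2^2$; the notation ``$\sigma^2\I_2$'' in the statement should be read as the diagonal matrix $\operatorname{diag}(\sigma_1^2, \sigma_2^2)$. I would therefore either invoke the general diagonal-covariance density (the natural extension of \Cref{def:pdf2}) or rely on the linear-combination characterisation, which sidesteps the need for an explicit density altogether. A minor degenerate case, where some $\sigma_i^2 = 0$, would be handled by interpreting the corresponding coordinate as almost surely constant.
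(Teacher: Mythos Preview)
Your first route---factorising the joint density via independence and matching it to the bivariate normal form---is exactly the paper's proof. You are in fact more careful than the paper, which writes the combined density with a single scalar $\sigma^2$ as though $\sigma_1=\sigma_2$; your observation that $\sigma^2\I_2$ should be read as $\operatorname{diag}(\sigma_1^2,\sigma_2^2)$ is well taken.
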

\begin{proof}
As $X$ and $Y$ are independent, their joint probability density function can be written as
\begin{align*}
    f_{X, Y}(x, y) &= f_X(x) f_Y(y) = \frac1{2\pi \sigma^2} \exp \left(-\frac{(x-\mu_1)^2 + (y-\mu_2)^2}{2\sigma^2} \right) \\
    &= \frac{1}{2\pi\sigma^2} \exp \left( -\frac{\norm{(x, y) - (\mu_1, \mu_2)}^2}{2\sigma^2} \right).\qedhere
\end{align*}
\end{proof}

\begin{definition}[Brownian motion] \label{def:brownianmotion}\cite[pg. 94]{shrevenfinancialmath} 
A (standard) Brownian motion or Wiener process is a stochastic process $W = \{W_t: t \in \left[0, \infty\right)\}$ such that
\begin{enumerate}
    \item $\P(W_0 = 0) = 1$;
    \item if $0 \leqslant s < t$, then $W_t - W_s \sim W_{t-s}$; \footnote{Throughout this chapter $ X \sim Y$ if $X$ is distributed as $Y$. }
    \item if $0 \leqslant t_1 < t_2 < \ldots < t_n$, then the increments $W_{t_1}, W_{t_2} - W_{t_1}, \dots, W_{t_n} - W_{t_{n-1}}$ are independent;
    \item if $t > 0$, then $W_t \sim N(0, t)$;
    \item the mapping $t \mapsto W_t$ is almost surely continuous.
\end{enumerate}
\end{definition}

\begin{proposition}\label{prop:1dbridgedifference}
Let $X = \{X_t : t \geqslant 0\}$ be a one-dimensional standard Brownian bridge. Then $X_t - X_s \sim X_{t-s}$ for all $0 \leqslant s < t \leqslant T$.
\end{proposition}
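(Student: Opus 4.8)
The plan is to work directly from the definition $X_t = W_t - \frac{t}{T}W_T$ of the standard one-dimensional Brownian bridge and to exploit that any finite linear combination of the jointly Gaussian family $\{W_r\}_{r \geqslant 0}$ is again Gaussian. Since a centered one-dimensional normal law is completely determined by its variance, it suffices to show that $X_t - X_s$ and $X_{t-s}$ are both centered normal random variables with the same variance.

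First I would substitute the definition to obtain
\[
X_t - X_s = (W_t - W_s) - \frac{t-s}{T} W_T.
\]
This is a linear combination of the Gaussian variables $W_s$, $W_t$, $W_T$, hence normally distributed, and since $W$ has mean zero, so does $X_t - X_s$. The same reasoning shows that $X_{t-s} = W_{t-s} - \frac{t-s}{T}W_T$ is centered normal as well, so the problem reduces entirely to matching variances.

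Next I would compute the variance of $X_t - X_s$ from the displayed expression, using $\operatorname{Var}(W_t - W_s) = t - s$, $\operatorname{Var}(W_T) = T$, and the covariance identity $\operatorname{Cov}(W_t - W_s, W_T) = \min(t,T) - \min(s,T) = t - s$, which is valid precisely because $s < t \leqslant T$. A short computation then yields
\[
\operatorname{Var}(X_t - X_s) = (t-s) - \frac{(t-s)^2}{T} = \frac{(t-s)\bigl(T - (t-s)\bigr)}{T}.
\]
This is exactly the quantity $\sigma^2(t-s)$ in the notation of \Cref{subsec:brownianbridge}, i.e.\ the variance of $X_{t-s}$. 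As both random variables are centered normal with identical variance, they are equal in distribution, which is the claim.

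The only genuine obstacle is the covariance term. Unlike a Brownian motion, the bridge has neither independent nor stationary increments, and one might expect the correction $-\frac{t-s}{T}W_T$ to spoil the matching. The key point is that the cross term $-2\frac{t-s}{T}\operatorname{Cov}(W_t - W_s, W_T)$ and the variance contribution $\frac{(t-s)^2}{T^2}\operatorname{Var}(W_T)$ combine to leave precisely $-\frac{(t-s)^2}{T}$, so the identity survives despite the correlation. Care must be taken to evaluate the minima using $s < t \leqslant T$, and this is where the hypothesis $t \leqslant T$ is used essentially.
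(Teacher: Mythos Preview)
Your proof is correct. Both you and the paper begin by writing $X_t - X_s = (W_t - W_s) - \tfrac{t-s}{T}W_T$, but you diverge in how you finish. The paper simply invokes $W_t - W_s \sim W_{t-s}$ and substitutes to obtain $W_{t-s} - \tfrac{t-s}{T}W_T = X_{t-s}$; this is terse and relies implicitly on the fact that the \emph{joint} law of $(W_t - W_s, W_T)$ coincides with that of $(W_{t-s}, W_T)$, which is true but not spelled out. You instead reduce to a variance computation, using that a centered Gaussian is determined by its variance, and carry out the covariance bookkeeping explicitly. Your route is slightly longer but more self-contained, since it makes the role of the hypothesis $t \leqslant T$ (needed for $\operatorname{Cov}(W_t - W_s, W_T) = t-s$) fully visible. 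One minor quibble: your closing remark that the bridge lacks stationary increments is at odds with the very statement you are proving, which is precisely a stationarity-of-increments assertion; what the bridge lacks is \emph{independent} increments.
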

\begin{proof}
Let us write
\begin{equation*}
X_t = W_t + \frac{t}{T} W_T,
\end{equation*}
where $W$ is a standard Brownian motion. Now we can calculate
\begin{align*}
    X_t - X_s
    &=  W_t - \frac t T W_T - W_s + \frac{s}{T} W_T \\
    &= (W_t - W_s) - \frac{t-s}{T} W_T.
\end{align*}
Since $W$ is a Brownian motion we have $W_t - W_s \sim W_{t-s}$, so
\begin{equation*}
    X_t - X_s \sim  W_{t-s} - \frac{t-s}{T} W_T = X_{t-s}. \qedhere
\end{equation*}
\end{proof}

\begin{definition}[Rayleigh distribution] \label{def:rayleighdistribution} \cite[pg. 276]{Riciandistribution}
Let $X, Y \sim N(0,\sigma^{2})$. Then $R = \sqrt{X^{2} + Y^{2}}.$ has the Rayleigh distribution, with the following mean: 
\begin{equation*}
    \mathbb{E}[R] = \sqrt{\frac{\pi}{2}} \sigma.
\end{equation*}

\end{definition}
\begin{definition}[Rice distribution] \label{def:riciandistribution} \cite[pg. 278]{Riciandistribution}
Let $Z \sim N_2(\mu, \sigma^2\I_2)$. Then $\norm{Z}$ has the Rician distribution with parameters $A = \norm{\mu}$ and $B = \sigma^2$. The expected value of $R$ is given by
\begin{equation*}
    \E[R] = \sqrt{B} \sqrt{\frac{\pi}{2}}L_{\frac{1}{2}}\left( - \frac{A^{2}}{2B} \right).
\end{equation*}
\end{definition}

\section{Simulated processes}\label{app:processes}

For our experiments, we have used data from five different simulated stochastic processes. We will now describe in more detail how these processes work.

We have used a range of simulated stochastic process to test the performance of the Brownian bridge method. The simplest model we considered, was a random walk with normally distributed increments. This process at time step $n$ is given by the sum $n$ of normally distributed random variables with a variance $\sigma$. We will refer to this model as \emph{discrete Brownian motion}. Quite similarly, we considered a process where we normalised the increments to all be the same distance. We refer to this model as the \emph{fixed velocity random walk}. 

The next model we considered, was a process that moves at some fixed speed $v$, but its movement direction changes. In this case, the angle that specifies the movement direction in $\mathbb{R}^2$ is a sum of independent normally distributed random variables with variance $\sigma$. We will refer to this model as an \emph{angular random walk}. For very high variances of these random variables the direction for some time step will almost be independent of the previous direction and therefore this behaviour will be similar to the fixed velocity random walk. 

Furthermore, we considered a random walk model which keeps track of an internal state which affects the chance to move in a certain direction. This model has a moving agent with a predetermined probability for the agent to either change direction on a grid or start/stop moving at each time step. The agent then either moves a set length in the determined direction or does not move. Ideally the chosen probabilities are gained from real movement data though we had to make educated guesses since we did not have access to this data. For our probabilities we used the assumptions that there is no inherent preference for turning left or right, that the chance of someone not changing state is large, and that the chance of suddenly making a 180 degree turn is small. We will refer to this model as the \emph{random walk with internal state}.

We also used this model to make a bridge between two points by generating random paths until one got to the correct endpoint. Even though this model had some nice properties the computational power needed to find a bridge this way quickly becomes huge if the bridge is long. This is why we stopped using this method, the code for this model can be found in our \href{https://github.com/kerimdelic/MathForIndustry/blob/main/CBS.ipynb}{notebook}. 

Finally, we also considered a \emph{run-and-tumble model}. Most of the time the particle moves in a straight line. However, at each time step, there is a probability of $1 - e^{-l}$ with $l > 0$ that the movement direction is changed. In case the movement direction is changed, a new direction is chosen with uniform probability. The run part refers to the straight line movement, whilst the tumble refers the random change of direction. We will refer to this model as a run-and-tumble.

\end{document}